\documentclass[12pt]{article}

\textwidth = 16 cm \textheight = 22 cm
\oddsidemargin = 0 cm \evensidemargin = 0 cm \topmargin = 0 cm
\parskip = 2 mm

\usepackage[inline,shortlabels]{enumitem}

\usepackage{amssymb,amsmath,amsthm}
\usepackage[USenglish]{babel}
\usepackage{listings}
\usepackage{tcolorbox}
\usepackage{thmtools,thm-restate}
\usepackage{auxkling}     
\lstset{
    basicstyle=\small,
    numbers=left,
    tabsize=4,
    frame=topline,
    columns=flexible,
    keepspaces=true,
    captionpos=b,
    language=pascal,
    mathescape=true,
    escapechar=\#,
    numbersep=0.75em,
    backgroundcolor=,          
    framerule=0.4pt,           
    keywords={for,if,then,else,return}
}

\theoremstyle{plain}
\newtheorem{theorem}{Theorem}
\newtheorem{lemma}[theorem]{Lemma}

\newcommand*{\localsearch}{Randomized Local Search\xspace}
\newcommand*{\shortlocalsearch}{\texttt{RLS}\xspace}

\newcommand*{\dml}{Destructive Majorization Lemma\xspace}
\newcommand*{\shortdml}{DML\xspace}

\newcommand*{\avg}{\varnothing}
\newcommand*{\disc}{\operatorname{disc}}

\newcommand*{\eqd}{\stackrel{d}{{=}}}

\newcommand*{\leqd}{\preceq}

\newcommand*{\citet}[1]{\cite{#1}}
\newcommand*{\Citet}[1]{\cite{#1}}

\hyphenation{op-tical net-works semi-conduc-tor}

\begin{document}

\title{Tight Load Balancing via Randomized Local Search\thanks{A preliminary version of this paper appeared in proceedings of 2017 IEEE International Parallel and Distributed Processing Symposium (IPDPS'17).}}

\author{Petra Berenbrink\thanks{Now affiliated with University of
        Hamburg, Germany. Email: \email{petra@sfu.ca}} 
        \ and Peter Kling\thanks{Now affiliated with University of
        Hamburg, Germany. Email: \email{pkling@sfu.ca}}\\
Simon Fraser University, Burnaby, Canada
\and Christopher Liaw\thanks{Supported by an NSERC graduate
        scholarship. Email: \email{cvliaw@cs.ubc.ca}} \ and Abbas Mehrabian\thanks{Affiliated with both UBC and SFU when this work was done. Supported by a PIMS Postdoctoral        Fellowship and an NSERC Postdoctoral Fellowship. Email:        \email{abbasmehrabian@gmail.com}.}\\
University of British Columbia, Vancouver, Canada}

\maketitle

\vspace{-1cm}
\begin{abstract}
\boldmath
We consider the following balls-into-bins process with $n$ bins and $m$ balls:
Each ball is equipped with a mutually independent exponential clock of rate 1.
Whenever a ball's clock rings, the ball samples a random bin and moves there if
the number of balls in the sampled bin is smaller than in its current bin.

This simple process models a typical load balancing problem where users (balls)
seek a selfish improvement of their assignment to resources (bins). From a game
theoretic perspective, this is a randomized approach to the well-known
\emph{KP-model}~\cite{Koutsoupias:1999aa}, while it is known as \localsearch
(\shortlocalsearch) in load balancing literature~\cite{goldberg,systemspaper}.
Up to now, the best bound on the expected time to reach perfect balance was
$\LDAUOmicron{{(\ln n)}^2+\ln(n)\cdot n^2/m}$ due to \citet{systemspaper}. We
improve this to an asymptotically tight $\LDAUOmicron{\ln(n)+n^2/m}$. Our
analysis is based on the crucial observation that performing \emph{destructive
moves} (reversals of \shortlocalsearch moves) cannot decrease the balancing
time. This allows us to simplify problem instances and to ignore
\enquote{inconvenient moves} in the analysis.
\unboldmath

\end{abstract}

Keywords:
Balls-into-bins; load balancing; randomized local search; coupling

\section{Introduction}\label{sec:introduction}

We consider a system of $n$ identical resources and $m$ identical users. Each
user must be assigned to exactly one resource. A user on resource $i$
experiences a load equal to the number of users on resource $i$. Users can
migrate between resources, and the goal is to find a fast and simple migration
strategy that reaches a \emph{perfectly balanced} state in which the load
experienced by the users differs by at most 1.

Such load balancing (or reallocation) problems are well-studied and have a
multitude of applications, ranging from scheduling in peer-to-peer
systems~\cite{Surana:2006aa} and channel allocation in wireless
networks~\cite{Petrova} to numerical applications such as computation of
dynamics~\cite{Boillat:1991aa}. In the past decade, scalability and robustness
concerns caused a shift from relatively complex centralized protocols to simple,
often randomized distributed protocols. Indeed, consider applications such as
multicore computers, routing in and between data centres, or load distribution
in peer-to-peer networks. Due to the sheer size of these systems,
maintainability issues, and increasing user demand, we have to push for
distributed protocols that are easy to implement and that do not rely on global
knowledge or user coordination.

\subsection{Protocol and Results in a Nutshell}

We analyze the following natural and simple load balancing process: Each of the
$m$ users is activated by an independent exponential clock of rate 1. Upon
activation, a user chooses one of the $n$ resources uniformly at random and
compares his currently experienced load with the load it would experience at the
new resource. He migrates to the new resource if and only if doing so does not
result in a worse load. From a game theoretic perspective, this is a simple
randomized approach to the well-known KP-model with unit weights and
capacities~\cite{Koutsoupias:1999aa}. In load balancing, this strategy is known
as \localsearch (\shortlocalsearch)~\cite{goldberg,systemspaper}. (Here, \enquote{local} refers to the closeness of two consecutive solutions
    in the solution space, since they differ by the placement of at most one
    ball.)

Our main result (\cref{thm:main}) is that \shortlocalsearch reaches perfect
balance in expected time $\LDAUOmicron{\ln(n)+ n^2/m}$, and with high
probability in time $\LDAUOmicron{\ln(n)+\ln(n) \cdot n^2/m}$. These bounds are
asymptotically tight and improve the previously best bounds~\cite{systemspaper}
by a logarithmic factor. At the heart of our improvement lies the simple but
tremendously useful \dml (\shortdml, see \cref{sec:results+overview}). The
\shortdml formalizes the intuition that performing \emph{destructive moves} (the
reverse of a move permitted by \shortlocalsearch) cannot result in a speedup.
This allows us to reverse unwanted protocol moves during the analysis.
Moreover, in any analysis phase we can simplify the given configuration to a
worst-case instance for that phase, reducing the number of cases to consider.

We continue with a survey of related literature in \cref{sec:related_work}.
Formal problem and protocol definitions are given in
\cref{sec:preliminaries}. \cref{sec:results+overview} states our main
result and introduces the above mentioned destructive moves argument. 
In \cref{app:aux} we gather some auxiliary results.
The
analysis of \shortlocalsearch can be found in
\cref{sec:analysis:homogeneous}. The paper closes with a short conclusion in
\cref{sec:conclusion}

\section{Related Work}\label{sec:related_work}

The following literature survey on load balancing adapts the balls-into-bins
terminology and the notation $n$ for the number of bins (resources) and $m$ for
the number of balls (users). (Note that $n$ and $m$ may be swapped in some other papers.)
Balls-into-bins games come in a vast number of variations and have a long
tradition to model load balancing and similar problems~\cite{Johnson:1977aa}.
Many of the most recent results consider the effect of the \enquote{power of 2
choices}~\cite{Mitzenmacher:2001aa} and processes that are in some form
\enquote{self-stabilizing}~\cite{Czumaj98,Berenbrink:2016aa,BCNPP15}. We refer
to~\cite{Berenbrink:2016aa,BCNPP15} for a recent and comprehensive overview of
these variants. Here, we focus on three classes of more closely related
balls-into-bins processes:
\begin{enumerate}[wide]
\item \emph{Local Search:}
    With respect to our work, the most relevant type of protocols are
    \emph{local search protocols}, where balls are sequentially activated and
    relocated with the goal to achieve a perfectly balanced situation. The term
    \enquote{local} is with respect to the solution space, since one step of the
    protocol changes the current solution by the placement of at most one ball.
    Protocols in this class are typically quite simple in that movement
    decisions depend only on the involved bins. Most closely related to our work
    are~\cite{goldberg,systemspaper}. They study exactly the same process
    (\shortlocalsearch). \Citet{goldberg} claims an $\LDAUOmicron{n^2}$ upper
    bound on the expected time to reach perfect balance. This is improved by
    \citet{systemspaper} to $\LDAUOmicron[small]{{(\ln(n))}^2+\ln(n)\cdot
    n^2/m}$, which is $\LDAUOmicron[small]{{(\ln (n))}^2}$ for the important
    case $m\gg n$. In the present work, we provide a tight upper bound of
    expected time $\LDAUTheta{\ln(n)+n^2/m}$ (i.e., $\LDAUTheta{\ln n}$ for
    $m\gg n$).

    \Citet{Czumaj:2003aa} study another local search protocol. Here, initially
    each ball picks two alternative bins and is placed arbitrarily in one of
    them. Now, in each step of the protocol a pair of bins $(b_1,b_2)$ is chosen
    uniformly at random. If there is a ball in $b_1$ with alternative bin $b_2$,
    then this ball is placed in the least loaded bin among $b_1$ and $b_2$. One
    of the major results in~\cite{Czumaj:2003aa} is that if the balls are
    initially placed via the power of 2 choices, then perfect balance is reached
    in $n^{\LDAUOmicron{1}}$ steps (the hidden constant is $\geq4$). In the same
    situation, \shortlocalsearch needs only $\LDAUOmicron{n^2}$ activations (see
    Phase~2 in \cref{sec:analysis:homogeneous}). Moreover, \shortlocalsearch
    can be started from an arbitrary load situation.

\item \emph{Selfish Load Balancing:}
    Another strain of related work has a game theoretical background and is
    published under the theme of \emph{selfish load
    balancing} (see~\cite{Vocking07} for a comprehensive survey). A key difference to local search protocols is that balls act
    simultaneously. This introduces a problem: moves that, in isolation, improve
    the load of a ball might become bad if many balls perform this move. So,
    while one can compare the balancing times to local search
    protocols, such a direct comparison should be taken with a grain of salt. (In one selfish load balancing time step all $m$ balls are activated.
            Similarly, in one time unit of \shortlocalsearch $m$ balls are activated
            in expectation.)
       In
    particular, the results below suggest that the time to perfect balance in
    selfish load balancing has an inherent dependency on $m$, while there is no
    such dependency for local search protocols.

    \Citet{evendar_2005} consider selfish load balancing protocols with global
    knowledge (e.g., the average load). This allows them to reach perfect
    balance in expected $\LDAUOmicron{\ln\ln m+\ln n}$ steps. \Citet{selfish}
    consider a protocol without global knowledge. Here, balls move to a randomly
    sampled bin with a probability depending on the load difference. They bound
    the expected balancing time by $\LDAUOmicron{\ln\ln m+n^4}$. In a follow-up
    work, \citet{selfishweighted} suggested another protocol with expected
    balancing time $\LDAUOmicron{\ln m+n\cdot\ln n}$. In comparison
    with~\cite{evendar_2005}, these results indicate that avoiding global
    knowledge in selfish load balancing might increase the dependency on $n$
    substantially.

\item \emph{Threshold Load Balancing:}
    A third series of articles evolves around the idea of \emph{threshold load
    balancing}: each ball has a threshold and moves with a certain probability
    to a random bin whenever its experienced load is above that
    threshold (note that~\cite{evendar_2005} also falls into this category). As in selfish load balancing, balls act simultaneously (resulting in a
    similar, seemingly inherent dependency on $m$). An interesting observation
    is that the \shortlocalsearch protocol can be seen as a (sequential)
    threshold protocol with an adaptive, local threshold (the sampled bin's
    load).

    \Citet{ackermann} introduced the idea of threshold load balancing and gave a
    protocol that balances up to a constant multiplicative factor in time
    $\LDAUOmicron{\ln m}$ and up to an additive constant in time
    $\LDAUOmicron{n^2\cdot\ln m}$.
    \Citet{threshold,DBLP:journals/corr/HoeferS13} extended these protocols to
    general graphs. Recent improvements by \citet{thresholdweighted} show that,
    on general graphs, one can balance up to a constant multiplicative factor in
    time $\LDAUOmicron{\tau_{\text{mix}}\cdot\ln m}$, $\tau_{\text{mix}}$ being
    the graph's mixing time.

\end{enumerate}

\section{Model and Notation}\label{sec:preliminaries}

We describe our load balancing problem in terms of balls and bins. There are $n$
bins (resources/processors) and $m$ balls (users/tasks). We use the shorthands
$\intcc{n}\coloneqq\set{1,2,\dots,n}$ and $\intcc{m}\coloneqq\set{1,2,\dots,m}$
for the set of bins and balls, respectively. A \emph{configuration}
$\bm{\ell}={(\ell_i)}_{i\in\intcc{n}}\in\N_0^n$ is an $n$-dimensional vector
with $\sum_{i\in\intcc{n}}\ell_i=m$. Its $i$-th component $\ell_i\in\N_0$
denotes the number of balls in bin $i$ (its \emph{load}).

We seek a simple distributed load balancing protocol (to be executed by each
ball) such that all bins end up with almost the same load. To define this
formally, let $\varnothing\coloneqq m/n$ denote the \emph{average load} of the
system. The \emph{discrepancy} of configuration $\bm{\ell}$ is
$\disc(\bm{\ell})\coloneqq\max_{i\in\intcc{n}}\abs{\ell_i-\avg}$. We say a
configuration $\bm{\ell}$ is \emph{$x$-balanced} if $\disc(\bm{\ell})\leq x$ and
\emph{perfectly balanced} if $\disc(\bm{\ell}) < 1$.

\paragraph{Protocol Description}
Let us formally describe the \emph{\localsearch} (\shortlocalsearch) protocol.
Each ball is equipped with an exponential clock of rate $1$, and the clocks are
mutually independent. A ball is \emph{activated} whenever its clock rings.
Consider a configuration $\bm{\ell}$ and assume ball $j\in\intcc{m}$ in bin
$i\in\intcc{n}$ is activated. Then, it chooses a \emph{destination bin}
$i'\in\intcc{n}$ uniformly at random and moves from $i$ to $i'$ if and only if
$\ell_{i}\geq \ell_{i'}+1$. Let us remark that the protocol studied by~\cite{goldberg,systemspaper} is slightly different:
    they allow movement from $i$ to $i'$ if $\ell_{i}> \ell_{i'}+1$. However,
    since the bins and the balls are identical, the two protocols have precisely
    the same balancing time.
\begin{tcolorbox} \textbf{\small\localsearch($\shortlocalsearch$)}
\begin{lstlisting}
{code executed by ball $j$ in bin $i$ when $j$ is activated}
sample random bin $i'$
if $\ell_{i}\geq \ell_{i'}+1$:
    move to bin $i'$
\end{lstlisting}
\mbox{}\vspace{-1.8em}
\end{tcolorbox}
This describes a continuous time stochastic process. Starting from an initial
configuration $\bm{\ell}$ we write $\bm{\ell}(t)$ for the configuration at time
$t$. Note that $\bm{\ell}$ depends on the random ball activations as well as the
random destination bin choices of each ball. Notice that \shortlocalsearch has
the desirable properties that the discrepancy never increases, the minimum load
never decreases, and the maximum load never increases.

\paragraph{Additional Notation}
We use $\N\coloneqq\set{1,2,\dots}$ for the (positive) natural numbers and
$\N_0\coloneqq\set{0,1,2\dots}$ to include zero in the natural numbers. For any
$k\in\N$ we define $H_k\coloneqq\sum_{i=1}^k1/i=\ln(k)+\LDAUOmicron{1}$ as the
$k$-th harmonic number. Given two random variables $A$ and $B$, we write
$B\preceq A$ if $A$ stochastically dominates $B$ (i.e., if $\Pr{A\geq
x}\geq\Pr{B\geq x}$ for all $x\in\R$), and we write $A\eqd B$ if $A$ and $B$ are
equal in distribution. We say an event $E$ holds with high probability (w.h.p.)
if $\Pr{E}\geq1-n^{-\LDAUOmega{1}}$. $\Bin(n,p)$ denotes a binomial random
variable with parameters $n$ and $p$, and $\Exp(\lambda)$ denotes an exponential
random variable with parameter $\lambda$.

\section{Results and Proof Outline}\label{sec:results+overview}

Our main result is the following theorem.
\begin{theorem}\label{thm:main}
Consider a system of $n$ identical bins and $m$ identical balls in an arbitrary
initial configuration. Let $T$ be the time when \shortlocalsearch reaches a
perfectly balanced configuration. We have $\Ex{T}=\LDAUOmicron{\ln(n)+n^2/m}$
and w.h.p.\@ $T=\LDAUOmicron{\ln(n)+\ln(n)\cdot n^2/m}$.
\end{theorem}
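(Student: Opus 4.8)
The plan is to combine the \shortdml with a two-phase potential argument.

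\textit{Step 1: collapse to a worst-case instance.} I would first apply the \shortdml to replace the arbitrary initial configuration by the single configuration $C_0$ in which all $m$ balls sit in one bin. Since $C_0$ is reachable from \emph{any} configuration by destructive moves --- repeatedly take a ball from a non-maximum occupied bin and move it into a maximum bin, which is the reverse of a legal \shortlocalsearch move --- the \shortdml shows that the balancing time from $C_0$ stochastically dominates that from every other configuration, so it suffices to bound $T$ starting from $C_0$. More generally, I would re-apply the \shortdml at the start of each phase below to pass to a structurally simple worst case for that phase, which removes most of the case analysis.

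\textit{Step 2: Phase~1 --- coarse balancing, target $\LDAUOmicron{\ln n}$.} From $C_0$ the overloaded bin ejects balls to uniformly random bins, so I would track the ``mass far above average'' $\beta(\tau)\coloneqq\sum_i(\ell_i-\tau)^+$ for a fixed threshold $\tau\approx 2\avg$. One checks that $\beta(\tau)$ never increases, and that while $\beta(\tau)\ge1$ it decreases at rate at least $\tfrac12\bigl(\beta(\tau)+\tau\bigr)$: at most $n/2$ bins have load $\ge\tau$ (pigeonhole, since $\tau\ge 2\avg$), so each of the $\ge\beta(\tau)+\tau$ balls in bins of load $\ge\tau+1$ samples a bin of load $<\tau$ with probability $\ge\tfrac12$ and then moves down past $\tau$. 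Feeding this into a standard logarithmic-supermartingale estimate (the drift of $\ln(\beta(\tau)+\tau)$ is bounded away from $0$) drives $\beta(\tau)$ to $0$ in expected time $\LDAUOmicron{\ln(1+m/\avg)}=\LDAUOmicron{\ln n}$; carrying the ``$+\tau$'' term is exactly what turns the naive $\ln m$ into $\ln(\ell_{\max}/\avg)\le\ln n$. This yields $\ell_{\max}=\LDAUOmicron{\avg}$; one then argues (this is the delicate side, see below) that the light bins have by then been filled to within $\LDAUOmicron{\avg}$ of the average as well, leaving an $\LDAUOmicron{\avg}$-balanced configuration after $\LDAUOmicron{\ln n}$ time, and with high probability after $\LDAUOmicron{\ln n}$ time via the block argument below.

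\textit{Step 3: Phase~2 --- fine balancing, target $\LDAUOmicron{n^2/m}$, and the w.h.p.\ bound.} From an $\LDAUOmicron{\avg}$-balanced configuration I would track the defect $D\coloneqq\sum_i(\ell_i-\lceil\avg\rceil)^+$. The only $D$-decreasing moves send a ball directly from an over-full bin to an under-full bin, and they occur at rate $(\#\text{balls in over-full bins})\cdot(\#\text{under-full bins})/n$. Using the \shortdml together with the observation that a concentrated defect disperses in $\LDAUOmicron{1}$ time (each deep bin is refilled at rate $\Theta(\avg)$, and the resulting secondary deficits disperse on a geometrically shrinking timescale), one reduces to the case in which the defect is fully spread out --- every over-full bin holds $\lceil\avg\rceil+1$ and every under-full bin $\lfloor\avg\rfloor-1$, so both counts are $\Theta(D)$ --- where this rate is $\Theta(\avg D^2/n)$. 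Integrating $\dot D\lesssim-\avg D^2/n$ from any $D(0)$ down to $D<1$ gives $\LDAUOmicron{n/\avg}=\LDAUOmicron{n^2/m}$, so $\Ex{T}=\LDAUOmicron{\ln n+n^2/m}$. For the high-probability bound I would split time into $\Theta(\ln n)$ consecutive blocks, each of length a constant times the relevant phase's expected duration; by Markov's inequality each block independently completes the phase with probability $\ge\tfrac12$, so after $\Theta(\ln n)$ blocks the phase is done except with probability $n^{-\LDAUOmega{1}}$. This multiplies the $n^2/m$ phase by $\ln n$, explaining the bound $\LDAUOmicron{\ln n+\ln n\cdot n^2/m}$.

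\textit{Main obstacle.} The hard part is the light-bin side of Phase~1 and making the phases dovetail. The pigeonhole trick that keeps the heavy side free of a $\ln m$ factor has no symmetric counterpart for the under-full bins (one cannot lower-bound the number of near-empty bins by pigeonhole), so one must argue directly that, while the heavy bin drains, the ejected balls land essentially uniformly and fill the light bins to within $\LDAUOmicron{\avg}$ of the average in $\LDAUOmicron{\ln n}$ time --- that is, control the fluctuations of those ejected balls against the smoothing that \shortlocalsearch performs among the light bins. Equally, one must verify that Phase~2 really begins from a configuration for which the estimate $\dot D\lesssim-\avg D^2/n$ applies (after the $\LDAUOmicron{1}$ dispersal), rather than one that secretly forces an extra logarithmic factor. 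Both points are where the freedom granted by the \shortdml --- to undo inconvenient moves and collapse to a canonical worst case --- does the real work.
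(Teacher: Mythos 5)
Your high-level framework matches the paper: apply the \shortdml to collapse any start to the all-in-one-bin configuration, run a coarse phase and a fine phase, and obtain the w.h.p.\ bound from the expectation bound by the Markov-plus-$\Theta(\ln n)$-blocks trick (this is exactly \cref{lem:exptowhp}). Your Phase~1 argument is genuinely different: the paper reaches $\LDAUOmicron{\avg}$-balance by coupling to a $\Bin(m-\avg,1/(n-1))$ occupancy vector and Chernoff (\cref{lem:smallm,lem:first}), whereas you use a monotone potential $\beta(\tau)=\sum_i(\ell_i-\tau)^+$ with drift $\leq-(\beta+\tau)/2$ and a logarithmic supermartingale. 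Both seem sound and both land at an $\LDAUOmicron{\avg}$-balanced configuration in time $\LDAUOmicron{\ln n}$; yours is arguably more elementary. However, you have misidentified the hard part: the light-bin side of Phase~1 is \emph{not} delicate at all, since $\ell_{\min}\geq 0\geq\avg-\avg$ gives the $\LDAUOmicron{\avg}$-lower-bound for free. The paper's extra work in Phase~1 (\cref{lem:main,lem:core}) goes in a direction you do not have: it iterates a $x\to 2\sqrt{x\ln n}$ halving to push all the way to $\LDAUOmicron{\ln n}$-balance before the fine phase begins, and this is exactly what your plan is missing.

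The real gap is Phase~2. From an $\LDAUOmicron{\avg}$-balanced configuration, the number $D$ of overloaded balls can be $\Theta(m)$ and concentrated so that the number of over-full bins $h$ and under-full bins $k$ is only $\Omega(D/\avg)$. Plugging $h,k=\Theta(D/\avg)$ into your rate bound gives $\dot D\lesssim -D^2/(\avg n)$, hence time $\LDAUOmicron{\avg n}$, which is far off from $\LDAUOmicron{n/\avg}$. To get the claimed $\Theta(\avg D^2/n)$ rate you need $h,k=\Theta(D)$, i.e.\ discrepancy $\LDAUOmicron{1}$. Your fix --- ``a concentrated defect disperses in $\LDAUOmicron{1}$ time'' --- is an assertion, not an argument, and notably the \shortdml cannot be invoked here: destructive moves only \emph{concentrate} a configuration, they cannot make it more spread out, so you cannot ``reduce to the fully spread-out case'' by adversarial moves. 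The paper resolves this with two ingredients you lack: (i) first tightening to $\LDAUOmicron{\ln n}$-balance so that $h,k=\Omega(D/\ln n)$ (\cref{lem:core}) and bringing $D$ down to $\leq n$ in time $\LDAUOmicron{(\ln n)^2/\avg}$ (\cref{lem:2}); and (ii) the potential $3A-k-h$ (\cref{lem:1}), which progresses even when the defect is concentrated, by crediting moves into bins of load exactly $\avg$ --- those increase $h$ or $k$ without changing $A$. That potential is the device that replaces your unproven dispersal claim, and it, together with the preparatory $\LDAUOmicron{\ln n}$-balancing, is where the quantitative improvement over prior work actually lives.
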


As observed in~\cite{systemspaper}, our bounds are asymptotically tight. Indeed,
assume that initially all balls are in the same bin. To reach a perfectly
balanced configuration, we need to activate at least $m-\varnothing$ balls. The
expected time to do so is at least
$\sum_{k=\avg+1}^m1/k=H_m-H_{\avg}=\LDAUOmega{\ln n}$, yielding the first term
in the lower bound. For the second term, suppose $\avg$ is an integer, and
consider a configuration in which exactly one bin has load $\avg+1$, one other
bin has load $\avg-1$, and every other bin has load $\avg$. The time to balance
perfectly is exactly the time until one of the balls in the overloaded bin is
activated and samples the underloaded bin. The latter is an exponential random
variable with parameter $(\avg+1)\cdot1/n$. Thus, the expected time to reach
perfect balance is $n/(\avg+1)=\LDAUOmega{n^2/m}$. Requiring a high probability
result gives an additional $\ln n$ factor.

\paragraph{Destructive Moves}
Before we continue, we state an auxiliary lemma which is used throughout our
analysis. In a nutshell, it states that reversing a ball movement of
\shortlocalsearch cannot improve the time to reach perfect balance. This
intuitively simple observation turns out to be extremely useful. Basically, we
will be able to reduce arbitrary initial configurations to \enquote{well shaped}
configurations (decreasing the number of cases to consider) and to ignore
certain (at the moment unwanted) moves of the protocol.

To formalize this idea, consider a configuration $\bm{\ell}$. We call a movement
of a ball from bin $i$ to bin $j$ \emph{destructive} if $\ell_i\leq \ell_j+1$.
Note that a movement is destructive if and only if it is the reversal of a valid
protocol move. Also note that, if $\ell_i = \ell_j + 1$, then a move from $i$ to
$j$ is both a valid protocol move and a destructive move (see
\cref{fig:move_types}). Such a move is called a \emph{neutral move}.

\begin{figure}
\centering
\includegraphics[width=\linewidth]{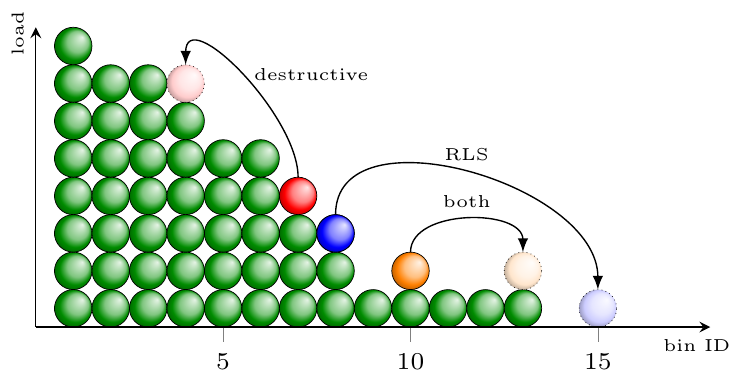}
\caption{Illustration of \shortlocalsearch moves versus destructive moves.}
\label{fig:move_types}
\end{figure}

\begin{restatable}[\dml]{lemma}{lemcoupling}\label{lem:coupling}
For any $t\geq0$, consider the load vector $\bm{\ell}(t)$ resulting from
protocol \shortlocalsearch at time $t$. Let $\bm{\tilde{\ell}}(t)$ denote the
load vector resulting from \shortlocalsearch at time $t$ under the presence of
an adversary who performs an arbitrary number of destructive moves after each ball
movement. Then $\disc(\bm{\ell}(t))\leqd\disc(\bm{\tilde{\ell}}(t))$.
\end{restatable}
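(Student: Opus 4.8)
The natural approach is a coupling argument where we run both processes on the same randomness (same clock rings, same destination-bin choices) and maintain, as an invariant, that the adversarial configuration $\bm{\tilde\ell}$ is "more spread out" than $\bm{\ell}$. The right notion of "more spread out" here should imply the discrepancy inequality but also be robust enough to be preserved under a single protocol move followed by arbitrary destructive moves; the candidate I would use is majorization of the sorted load vectors. Writing $\bm{\ell}^\downarrow$ for the vector $\bm{\ell}$ sorted in nonincreasing order, I would try to prove the stronger invariant that at all times $\bm{\ell}(t) \preceq_{\mathrm{maj}} \bm{\tilde\ell}(t)$, i.e. $\sum_{i=1}^k \ell^\downarrow_i \le \sum_{i=1}^k \tilde\ell^\downarrow_i$ for every $k$, with equality at $k=n$ since both have total mass $m$. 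This immediately gives $\disc(\bm{\ell}(t)) \le \disc(\bm{\tilde\ell}(t))$, because the discrepancy of a configuration is determined by its largest and smallest coordinates, and these are controlled by the $k=1$ and $k=n-1$ (equivalently, the complementary bottom-$1$) partial sums of the sorted vector.

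The plan is then to set up the coupling carefully and do a case analysis on one event of the coupled process. Both processes are driven by $m$ independent rate-$1$ clocks, one per ball, together with independent uniform destination choices; I couple them so that the same ball rings at the same time and picks the same destination bin index in both processes. (One subtlety: the adversary's destructive moves may have caused the two processes to assign different balls to different bins, so "the same ball" needs care — but since balls are indistinguishable, I can instead couple at the level of bin indices, matching bins across the two processes in sorted order, and argue that an activation in $\bm{\ell}$ corresponds to an activation of a bin of a given rank.) Between events nothing changes, so it suffices to show: if $\bm{\ell} \preceq_{\mathrm{maj}} \bm{\tilde\ell}$ and a single activation occurs, then after the protocol move in the $\bm\ell$-process and after the protocol move plus any destructive moves in the $\bm{\tilde\ell}$-process, majorization still holds. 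I would split this into two independent claims: (a) a single protocol \shortlocalsearch move can only decrease (in the majorization order) the configuration it is applied to — moving a ball from a fuller bin to a strictly-less-full bin is a "Robin Hood" transfer, which is the textbook way majorization decreases; and (b) a single destructive move can only increase the configuration in the majorization order — it is the reverse transfer, from a bin with load $\le \ell_j+1$ to bin $j$, which pushes mass toward the top. Combining (a) applied to $\bm\ell$ and (b) applied repeatedly to $\bm{\tilde\ell}$, together with the fact that majorization is a partial order preserved in the appropriate direction, closes the induction — provided I can also handle the case where the activated ball does \emph{not} move in one process but the coupled activation does something in the other.

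The main obstacle, and the step I would spend the most care on, is exactly this last point: after a coupled activation, the $\bm\ell$-process might perform its protocol move while the $\bm{\tilde\ell}$-process does not (the sampled bin relation $\tilde\ell_i \ge \tilde\ell_{i'}+1$ might fail even though $\ell_i \ge \ell_{i'}+1$ holds), or vice versa. When only $\bm\ell$ moves, the invariant could in principle break, since we are decreasing the smaller vector without compensating on the larger one. The resolution I anticipate: in the case where $\bm\ell$ moves but $\bm{\tilde\ell}$ does not, the adversary is explicitly allowed to perform destructive moves, so I let the adversary \emph{mimic} the decrease — either by performing the corresponding move in $\bm{\tilde\ell}$ when it happens to be legal there as a (possibly destructive) move, or, when that fails, by first performing a destructive move in $\bm{\tilde\ell}$ to "make room" and then the analogous transfer. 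The key structural fact enabling this is that the set of moves available to the adversary (all destructive moves, i.e. all reversals of protocol moves) is rich enough that from a majorizing configuration one can always reach another majorizing configuration that dominates the updated $\bm\ell$; I would prove a small combinatorial lemma to this effect, essentially showing that if $\bm a \preceq_{\mathrm{maj}} \bm b$ and $\bm a'$ is obtained from $\bm a$ by one Robin-Hood transfer, then there is a sequence of reverse-Robin-Hood transfers taking $\bm b$ to some $\bm b' \succeq_{\mathrm{maj}} \bm a'$. Once that lemma is in hand the induction is routine; everything else is bookkeeping about sorted vectors.
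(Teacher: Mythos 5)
Your plan---couple the two processes and preserve a monotone invariant on the sorted load vectors---is in the same spirit as the paper's, and your observations that a protocol move is a Robin-Hood transfer and a destructive move its reverse are correct. The gap is in the choice of invariant: majorization is too weak to support the coupling, and the repair you sketch does not hold up. The paper proves the lemma via a reduction you are missing: define $P^{(k)}$ to be the process in which only the first $k$ adversarial destructive moves are applied, so $P^{(0)}$ is the clean process and $P^{(\infty)}$ the fully adversarial one, and show $\disc(\bm{\ell^{(k)}}(t)) \preceq \disc(\bm{\ell^{(k+1)}}(t))$ for each $k$, finishing by transitivity. This one-move-at-a-time decomposition lets the paper maintain a far tighter coupling invariant than majorization, namely that at every time the two load vectors are \emph{close}: they differ by at most a single destructive move. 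Closeness immediately implies the discrepancy ordering, and---crucially---it means exactly one ball sits in a different bin on the two sides, so the per-ball clocks and destination samples can be coupled ball-for-ball and the case analysis reduces to tracking how that single discrepant ball interacts with the source and destination bins of each step.

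With only a majorization invariant, the two configurations can disagree in arbitrarily many positions, and your workaround of coupling at the level of bin ranks is not a valid coupling: the rank-$i$ bin is hit by a clock ring at rate $\ell^{\downarrow}_i$ in one process and $\tilde\ell^{\downarrow}_i$ in the other, and these rates can differ substantially when the vectors are far apart, so activations cannot be matched across processes without slack you would then have to control. Your second fix, having the adversary mimic the clean process when only $\bm{\ell}$ moves, is also unavailable: the lemma quantifies over an \emph{arbitrary} adversary, so you may not choose the adversary's strategy inside the proof. The paper's closeness invariant dodges both difficulties at once, because a one-ball discrepancy keeps the ball-for-ball coupling exact and leaves a finite case analysis (is the activated ball the discrepant one, and does the source or destination coincide with one of the two bins of the extra destructive move), each case of which restores closeness regardless of what the adversary does next.
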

We remark that
the adversary is allowed to have full knowledge of the protocol and its random choices
    (even future).

\begin{proof}
\newcommand*{\dstrSrc}{i_{\text{R}}}%
\newcommand*{\dstrDst}{i_{\text{L}}}%
\newcommand*{\moveSrc}{i_{\text{S}}}%
\newcommand*{\moveDst}{i_{\text{D}}}%
For $k\in\N_0$, consider the random process $P^{(k)}$ that executes our protocol
starting in the initial configuration $\bm{\ell}(0)$ under the presence of the
first $k$ adversarial (destructive) moves (ignoring any further adversarial
moves). Let $\bm{\ell^{(k)}}(t)$ denote the configuration at time $t$ under
process $P^{(k)}$. Note that $\bm{\ell^{(0)}}(t)\eqd\bm{\ell}(t)$ and
$\bm{\ell^{(\infty)}}(t)\eqd\bm{\tilde{\ell}}(t)$ for all $t \geq 0$. We show
that for all $k\in\N_0$ and $t\geq0$ we have
$\disc\bigl(\bm{\ell^{(k)}}(t)\bigr)\leqd\disc\bigl(\bm{\ell^{(k+1)}}(t)\bigr)$.
The lemma follows from this via the transitivity of stochastic domination.

Call a configuration $\bm{\ell'}$ \emph{close} to a configuration $\bm{\ell}$ if
$\bm{\ell'}$ is constructed from $\bm{\ell}$ by at most one destructive move.
Two immediate observations are:
\begin{enumerate}[(i)]
\item
    Either $\bm{\ell'}=\bm{\ell}$ or there are two bins $\dstrDst\neq\dstrSrc$
    with $\ell_{\dstrSrc}\leq \ell_{\dstrDst}+1$ such that
    $\ell'_{\dstrDst}=\ell_{\dstrDst}+1$, $\ell'_{\dstrSrc}=\ell_{\dstrSrc}-1$,
    and $\ell'_i=\ell_i$ for all $i\not\in\set{\dstrDst,\dstrSrc}$ (i.e.,
    $\bm{\ell'}$ is constructed from $\bm{\ell}$ by a destructive move from
    $\dstrSrc$ to $\dstrDst$. (If you think of the bins ordered non-increasingly, the destructive move
        goes from \textbf{R}ight ($\dstrSrc$) to \textbf{L}eft ($\dstrDst$).)
    
\item
    We have $\disc(\bm{\ell})\leq\disc(\bm{\ell'})$.
\end{enumerate}

Fix a $k\in\N_0$ and consider the processes $P^{(k)}$ and $P^{(k+1)}$.
Initially, we have $\bm{\ell^{(k)}}(0)=\bm{\ell^{(k+1)}}(0)$ and, thus,
$\bm{\ell^{(k+1)}}(0)$ is close to $\bm{\ell^{(k)}}(0)$. To show the
majorization
$\disc\bigl(\bm{\ell^{(k)}}(t)\bigr)\leqd\disc\bigl(\bm{\ell^{(k+1)}}(t)\bigr)$,
it is sufficient (by the above observations) to define a coupling between
$P^{(k)}$ and $P^{(k+1)}$ that maintains $\bm{\ell^{(k+1)}}(t)$ being close to
$\bm{\ell^{(k)}}(t)$ for all $t\geq0$. So fix $t\in\N$ and assume
$\bm{\ell^{(k+1)}}(t-1)$ is close to $\bm{\ell^{(k)}}(t-1)$. To simplify
notation, let $\bm{\ell}\coloneqq\bm{\ell^{(k)}}(t-1)$ and
$\bm{\ell'}\coloneqq\bm{\ell^{(k+1)}}(t-1)$. 

We claim that without loss of generality, we may let
both $\bm{\ell}$ and $\bm{\ell'}$ be sorted non-increasingly, such that $\ell_1
\geq \ell_2 \geq \dots \geq \ell_n$ and $\ell'_1 \geq \ell'_2 \geq \dots \geq
\ell'_n$.
The reason is that
    \shortlocalsearch is ignorant of the bin order, so we can assume it sorts
    configurations non-increasingly before each step. Also, sorting both
    $\bm{\ell}$ and $\bm{\ell'}$ maintains $\bm{\ell'}$ being close to
    $\bm{\ell}$: Assume $\bm{\ell}$ is sorted and construct $\bm{\ell'}$ from
    $\bm{\ell}$ by a destructive move from $\dstrSrc'$ to $\dstrDst'$. This
    implies $\dstrDst'\leq\dstrSrc'$. Let $\smash{\bm{\vec{\ell'}}}$ be the
    sorted version of $\bm{\ell'}$,
    $\dstrSrc\coloneqq\max\set{i\in\intcc{n}|\ell_i=\ell_{\dstrSrc'}}$, and
    $\dstrDst\coloneqq\min\set{i\in\intcc{n}|\ell_i=\ell_{\dstrDst'}}$. Then
    $\smash{\bm{\vec{\ell'}}}$ is constructed from $\bm{\ell}$ by a destructive
    move from $\dstrSrc$ to $\dstrDst$ and $\dstrDst\leq\dstrSrc$. 
    
    If $\bm{\ell'}=\bm{\ell}$ we use the identity coupling. Otherwise,
$\bm{\ell'}$ is constructed from $\bm{\ell}$ by a destructive move from a bin
$\dstrSrc$ to a bin $\dstrDst$ with $\dstrDst<\dstrSrc$. Without loss of
generality, let $m$ be the ball in which $\bm{\ell}$ and $\bm{\ell'}$ differ and
assume all other balls $j\in\intcc{m-1}$ are in the same bin in configuration
$\bm{\ell}$ and $\bm{\ell'}$. We couple the random choices of $P^{(k+1)}$ to the
random choices of $P^{(k)}$ as follows: Assume $P^{(k)}$ activates ball
$j\in\intcc{m}$ who is in source bin $\moveSrc$ in $\bm{\ell}$ and chooses
destination bin $\moveDst\in\intcc{n}$ (the $\moveDst$-th fullest bin in
$\bm{\ell}$). Then $P^{(k+1)}$ activates $j$ and chooses destination bin
$\moveDst$.

\begin{figure*}
\centering
\includegraphics{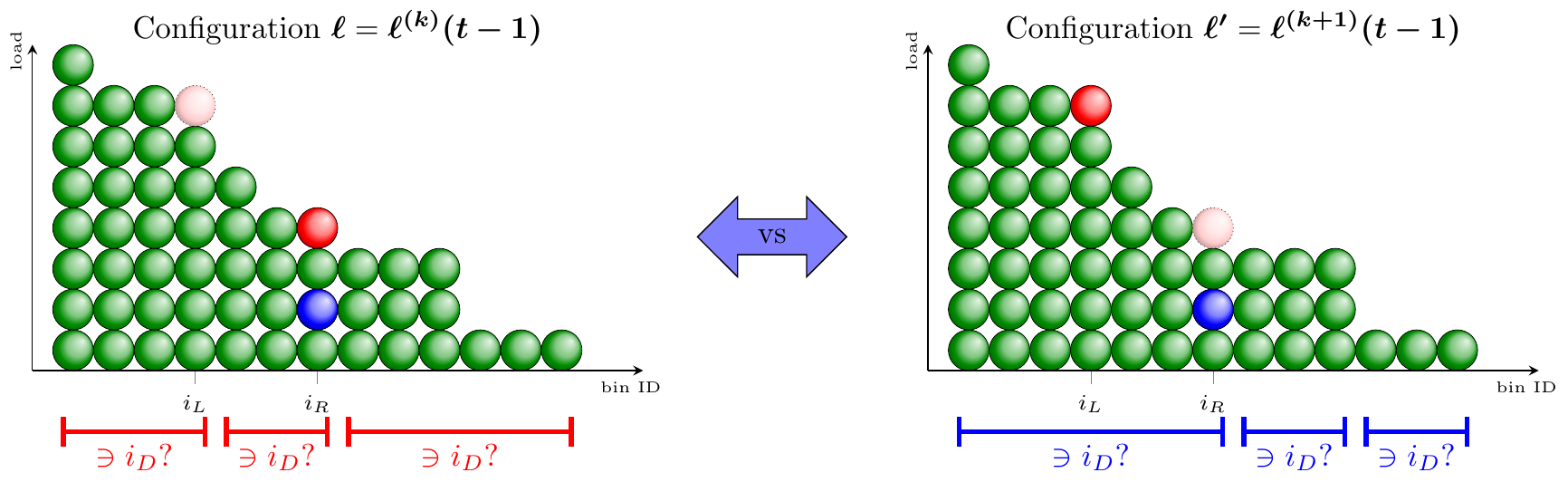}
\caption{%
    Illustration of the coupling used for \cref{lem:coupling}. The red ball
    is $m$. Thus, $\bm{\ell'}$ results from $\bm{\ell}$ by a destructive move
    from $\dstrSrc=7$ to $\dstrDst=4$. One of the balls (e.g., the blue or red
    ball) is activated in both processes and tries to move from its source to a
    random destination. If the red ball is activated, the source is $\moveSrc=7$
    on the left and $4$ on the right. If the blue ball is activated, the source
    is $\moveSrc=7$ both on the left and right. The destination $\moveDst$ is
    always the same on both sides. If the red ball is activated, the red
    intervals indicate the three subcases of the first part of
    Case~\ref{it:lem:coupling:case2} from the proof. If the blue ball is
    activated, the blue intervals indicate the three subcases of the second part
    of the same case (we drew the intervals below the right figure for space
    reasons). Similar pictures can be drawn for the remaining cases.
}
\label{fig:DML}
\end{figure*}
See \cref{fig:DML} for an illustration of the coupling and the following case
discrimination. It remains to show that the resulting configuration
$\bm{\ell^{(k+1)}}(t)$ of $P^{(k+1)}$ is close to the resulting configuration
$\bm{\ell^{(k)}}(t)$ of $P^{(k)}$. This is immediate if $\bm{\ell}=\bm{\ell'}$
(since then we use the identity coupling). Otherwise, $\bm{\ell'}$ is
constructed from $\bm{\ell}$ by a destructive move from a bin $\dstrSrc$ to a
bin $\dstrDst$ with $\dstrDst<\dstrSrc$. We distinguish the following cases
depending on the source and destination bins of process $P^{(k)}$:
\begin{enumerate}[(1)]
\item \underline{$\moveSrc,\moveDst \not\in \set{\dstrDst,\dstrSrc}$}
\label{destr:case:one}

    The processes behave identical and $\bm{\ell^{(k+1)}}(t)$ still results from
    $\bm{\ell^{(k)}}(t)$ by a destructive move from $\dstrSrc$ to $\dstrDst$.

\item \underline{$\moveSrc=\dstrSrc$}
\label{it:lem:coupling:case2}

    The activated ball might be $m$. If that is the case, $P^{(k)}$ activates a
    ball in bin $\dstrSrc$, while $P^{(k+1)}$ activates a ball in bin
    $\dstrDst$. We distinguish three subcases depending on the destination bin:
    If $\moveDst \leq \dstrDst$, both moves fail and nothing changes. If
    $\dstrDst < \moveDst \leq \dstrSrc$, only the move in $P^{(k+1)}$ succeeds
    and either the configurations become identical (if $\moveDst = \dstrSrc$) or
    $\bm{\ell^{(k+1)}}(t)$ results from $\bm{\ell^{(k)}}(t)$ by a destructive
    move from $\dstrSrc$ to $\moveDst$. If $\moveDst > \dstrSrc$, both moves
    succeed and the configurations become identical.

    If the activated ball is not $m$, both processes activate a ball in
    $\dstrSrc$. We distinguish three subcases depending on the destination bin:
    If $\moveDst \in \set{i\in\intcc{n}|\ell_i>\ell_{\dstrSrc}-1}$, both moves
    fail and nothing changes. If $\moveDst \in
    \set{i\in\intcc{n}|\ell_i=\ell_{\dstrSrc}-1}$, only the move in $P^{(k)}$
    succeeds (a neutral move) and $\bm{\ell^{(k+1)}}(t)$ results from
    $\bm{\ell^{(k)}}(t)$ by a destructive move from $\moveDst$ to $\dstrDst$.
    If $\moveDst \in \set{i\in\intcc{n}|\ell_i<\ell_{\dstrSrc}-1}$, both moves
    succeed and $\bm{\ell^{(k+1)}}(t)$ still results from $\bm{\ell^{(k)}}(t)$
    by a destructive move from $\dstrSrc$ to $\dstrDst$.

\item \underline{$\moveSrc=\dstrDst$}

    The activated ball cannot be $m$, so both processes activate a ball in bin
    $\dstrDst$. We distinguish three subcases depending on the destination bin:
    If $\moveDst \in \set{i\in\intcc{n}|\ell'_i>\ell'_{\dstrDst}-1}$, both moves
    fail and nothing changes. If $\moveDst \in
    \set{i\in\intcc{n}|\ell'_i=\ell'_{\dstrDst}-1}$, only the move in
    $P^{(k+1)}$ succeeds (a neutral move) and $\bm{\ell^{(k+1)}}(t)$ results
    from $\bm{\ell^{(k)}}(t)$ by a destructive move from $\dstrSrc$ to
    $\moveDst$. If $\moveDst \in
    \set{i\in\intcc{n}|\ell'_i<\ell'_{\dstrDst}-1}$, both moves succeed and
    $\bm{\ell^{(k+1)}}(t)$ still results from $\bm{\ell^{(k)}}(t)$ by a
    destructive move from $\dstrSrc$ to $\dstrDst$.

\item \underline{$\moveSrc \not\in \set{\dstrDst,\dstrSrc}, \moveDst=\dstrDst$}

    The activated ball cannot be $m$, so both processes activate a ball in the
    same bin $\moveSrc$. We distinguish three subcases depending on the source
    bin: If $\moveSrc \in \set{i\in\intcc{n}|\ell'_i<\ell'_{\dstrDst}}$, both
    moves fail and nothing changes. If $\moveSrc \in
    \set{i\in\intcc{n}|\ell'_i=\ell'_{\dstrDst}}$, only the move in $P^{(k)}$
    succeeds (a neutral move) and $\bm{\ell^{(k+1)}}(t)$ results from
    $\bm{\ell^{(k)}}(t)$ by a destructive move from $\dstrSrc$ to $\moveSrc$.
    If $\moveSrc \in \set{i\in\intcc{n}|\ell'_i>\ell'_{\dstrDst}}$, both moves
    succeed and $\bm{\ell^{(k+1)}}(t)$ still results from $\bm{\ell^{(k)}}(t)$
    by a destructive move from $\dstrSrc$ to $\dstrDst$.

\item \underline{$\moveSrc \not\in \set{\dstrDst,\dstrSrc}, \moveDst=\dstrSrc$}

    The activated ball cannot be $m$, so both processes activate a ball in the
    same bin $\moveSrc$. We distinguish three subcases depending on the source
    bin: If $\moveSrc \in \set{i\in\intcc{n}|\ell_i<\ell_{\dstrSrc}}$, both
    moves fail and nothing changes. If $\moveSrc \in
    \set{i\in\intcc{n}|\ell_i=\ell_{\dstrSrc}}$, only the move in $P^{(k+1)}$
    succeeds (a neutral move) and $\bm{\ell^{(k+1)}}(t)$ results from
    $\bm{\ell^{(k)}}(t)$ by a destructive move from $\moveSrc$ to $\dstrDst$.
    If $\moveSrc \in \set{i\in\intcc{n}|\ell_i>\ell_{\dstrSrc}}$, both moves
    succeed and $\bm{\ell^{(k+1)}}(t)$ still results from $\bm{\ell^{(k)}}(t)$
    by a destructive move from $\dstrSrc$ to $\dstrDst$.

\end{enumerate}
In all cases, $\bm{\ell^{(k+1)}}(t)$ is close to $\bm{\ell^{(k)}}(t)$. This
completes the proof.
\end{proof}

\section{Auxiliary Results}\label{app:aux}
\begin{lemma}[{Chernoff Bound, see~\cite[Theorem~4.4]{Mitzenmacher:2005aa}}]\label{lem:chernoff}
Consider the binomial distribution $\Bin(n,p)$ with parameters $n\in\N$ and $p\in\intcc{0,1}$.
Then, for any $\varepsilon\in\intcc{0,3/2}$ and $R\geq6np$ we have
\begin{align}
\Pr{\abs{\Bin(n,p)-np}>\varepsilon\cdot n p} &<    2e^{-\frac{\varepsilon^2\cdot n p}{3}}\quad\text{and}\label{chernoffsmalldev}\\
\Pr{\Bin(n,p)\geq R}                         &\leq 2^{-R}
.
\end{align}
\end{lemma}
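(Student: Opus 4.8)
The plan is to run the standard moment-generating-function (Chernoff/Bernstein) argument and then deduce both displayed inequalities from elementary estimates on $\ln(1\pm\delta)$. First I would let $X\sim\Bin(n,p)$, write $X=\sum_{i=1}^{n}X_i$ with the $X_i$ i.i.d.\ Bernoulli($p$), and set $\mu\coloneqq np$; the key moment bound is $\Ex{e^{tX_i}}=1+p(e^t-1)\leq e^{p(e^t-1)}$ (from $1+x\leq e^x$), which by independence yields $\Ex{e^{tX}}\leq e^{\mu(e^t-1)}$ for all $t\in\R$. Applying Markov's inequality to $e^{tX}$ with $t=\ln(1+\delta)$ and to $e^{-tX}$ with $t=-\ln(1-\delta)$ then gives, after the standard optimization, the familiar tail bounds
\[
\Pr{X\geq(1+\delta)\mu}\leq\Bigl(\frac{e^{\delta}}{(1+\delta)^{1+\delta}}\Bigr)^{\mu}\ \ (\delta\geq0),
\qquad
\Pr{X\leq(1-\delta)\mu}\leq\Bigl(\frac{e^{-\delta}}{(1-\delta)^{1-\delta}}\Bigr)^{\mu}\ \ (0\leq\delta<1).
\]

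For the first displayed inequality I would write $\varepsilon$ for $\delta$ and split $\Pr{\abs{X-\mu}>\varepsilon\mu}\leq\Pr{X\geq(1+\varepsilon)\mu}+\Pr{X\leq(1-\varepsilon)\mu}$. For the upper tail the goal is $\bigl(e^{\delta}/(1+\delta)^{1+\delta}\bigr)^{\mu}\leq e^{-\mu\delta^2/3}$, i.e.\ $f(\delta)\coloneqq(1+\delta)\ln(1+\delta)-\delta-\delta^2/3\geq0$ on $\intcc{0,3/2}$; here $f(0)=f'(0)=0$, $f''(\delta)=1/(1+\delta)-2/3$ changes sign only at $\delta=1/2$, so $f'$ rises then falls, is positive on $(0,1]$, and vanishes at a single point $\delta_0\in(1,3/2)$, whence $f$ increases on $\intcc{0,\delta_0}$ and decreases afterwards; since $f(0)=0$ and $f(3/2)=(5/2)\ln(5/2)-9/4>0$ this gives $f\geq0$ throughout. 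For the lower tail, when $\varepsilon\in\intcc{0,1}$ the analogous function $g(\delta)\coloneqq(1-\delta)\ln(1-\delta)+\delta-\delta^2/2$ satisfies $g(0)=g'(0)=0$ and $g''(\delta)=\delta/(1-\delta)\geq0$, so $g\geq0$ and hence $\bigl(e^{-\delta}/(1-\delta)^{1-\delta}\bigr)^{\mu}\leq e^{-\mu\delta^2/2}\leq e^{-\mu\delta^2/3}$; when $\varepsilon\geq1$ the event $\{X<(1-\varepsilon)\mu\}$ is empty, so only the upper-tail term remains. Summing the two tail bounds yields the claim, and the inequality is strict because in each regime one of the two terms is either absent or strictly below $e^{-\mu\varepsilon^2/3}$ (the degenerate cases $\varepsilon=0$ and $p=0$ being trivial).

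For the second displayed inequality I would start again from the upper-tail bound with $R\coloneqq(1+\delta)\mu$ and rewrite $\bigl(e^{\delta}/(1+\delta)^{1+\delta}\bigr)^{\mu}=e^{-\mu}\bigl(e\mu/R\bigr)^{R}\leq\bigl(e\mu/R\bigr)^{R}$, which is valid whenever $R\geq\mu$. If $R\geq 6\mu=6np$ then $e\mu/R\leq e/6<1/2$, hence $\Pr{X\geq R}\leq(e/6)^{R}\leq 2^{-R}$.

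The only mildly delicate point is the elementary inequality $f\geq0$ on $\intcc{0,3/2}$: the usual textbook treatment only covers $\delta\in\intcc{0,1}$, on which $f$ is monotone, whereas on $\intcc{0,3/2}$ one must instead argue through the single interior maximum of $f$. Alternatively there is essentially nothing to prove: one may simply invoke~\cite[Theorem~4.4]{Mitzenmacher:2005aa} for the first inequality, together with the standard corollary $\Pr{X\geq R}\leq 2^{-R}$ for $R\geq 6np$, which is verbatim the statement being used.
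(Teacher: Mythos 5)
The paper does not prove this lemma; it states it as a standard auxiliary fact and cites it verbatim to \cite[Theorem~4.4]{Mitzenmacher:2005aa}, so there is no in-paper proof to compare against. Your derivation from the moment-generating-function bounds is correct and, in one respect, is genuinely more careful than the citation warrants. The usual textbook statement of the bound $\Pr{X\geq(1+\delta)\mu}\leq e^{-\mu\delta^2/3}$ is restricted to $\delta\in(0,1]$, whereas the lemma as written allows $\varepsilon\in\intcc{0,3/2}$; your analysis of $f(\delta)=(1+\delta)\ln(1+\delta)-\delta-\delta^2/3$ via its unique interior maximum (since $f''$ changes sign only at $\delta=1/2$, $f'$ rises then falls and vanishes once in $(1,3/2)$) together with the endpoint check $f(3/2)=(5/2)\ln(5/2)-9/4>0$ is exactly what is needed to close that gap, and it is correct. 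The lower-tail computation (convexity of $g$, so $g\geq 0$ and hence the stronger exponent $\delta^2/2$), the observation that the lower-tail event is vacuous for $\varepsilon\geq 1$, and the derivation of $\Pr{X\geq R}\leq(e\mu/R)^R\leq(e/6)^R\leq 2^{-R}$ for $R\geq 6\mu$ are all standard and correct. One small cosmetic caveat: for the inequality to be strict as stated, the corner cases $\varepsilon=0$ and $\mu=0$ are handled by the trivial bound $\Pr{\cdot}\leq 1<2$, which you note; elsewhere strictness follows because $f>0$ and $g>0$ on $(0,3/2]$ and $(0,1)$ respectively. In short, your proof is sound and actually a bit stronger than a bare citation, since it certifies the range $\intcc{0,3/2}$ rather than only $\intcc{0,1}$.
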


\begin{lemma}[Concentration: Sum of Independent Exponentials]\label{lem:concentration_exponentials}
Let $X$ be a sum of independent exponential random variables, each having parameter $\geq\lambda$.
Then for any $\delta$
\begin{equation}
\Pr{X\geq\Ex{X}+\delta}\leq e^{\lambda^2\Var{X}/4-\lambda\delta/2}
.
\end{equation}
\end{lemma}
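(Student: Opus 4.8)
The plan is to prove \cref{lem:concentration_exponentials} by a standard Chernoff-type moment-generating-function argument, where the key technical point is controlling the MGF of a single exponential random variable near zero. Write $X = \sum_{i=1}^N Y_i$ where the $Y_i$ are independent and $Y_i \sim \Exp(\lambda_i)$ with each $\lambda_i \geq \lambda$. The MGF of $Y_i$ is $\Ex{e^{tY_i}} = \lambda_i/(\lambda_i - t)$ for $0 \leq t < \lambda_i$, which we will evaluate and bound for $t$ in the restricted range $0 \leq t \leq \lambda/2$ (so that $t < \lambda_i$ is safely satisfied). The first step is therefore to pick the threshold $t = \lambda/2$ and apply Markov's inequality in the usual way:
\begin{equation*}
\Pr{X \geq \Ex{X} + \delta} \leq e^{-t(\Ex{X}+\delta)} \prod_{i=1}^N \frac{\lambda_i}{\lambda_i - t}.
\end{equation*}

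The second step is the core estimate: I would show that for each $i$, $\log\bigl(\lambda_i/(\lambda_i-t)\bigr) \leq t/\lambda_i + t^2/\lambda_i^2$ whenever $0 \leq t \leq \lambda_i/2$. This follows from the series expansion $-\log(1-u) = \sum_{k\geq1} u^k/k$ with $u = t/\lambda_i \in [0,1/2]$: the linear term is exactly $u = t/\lambda_i$, and the tail $\sum_{k\geq2} u^k/k \leq \sum_{k\geq 2} u^k = u^2/(1-u) \leq 2u^2 = 2t^2/\lambda_i^2$; a slightly more careful bookkeeping (or just using $u \le 1/2$ more aggressively) gives the constant I actually want. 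Summing over $i$, and using $\sum_i 1/\lambda_i = \Ex{X}$ and $\sum_i 1/\lambda_i^2 = \Var{X}$ (since $\Ex{Y_i} = 1/\lambda_i$ and $\Var{Y_i} = 1/\lambda_i^2$), the product of MGFs is bounded by $\exp\bigl(t\Ex{X} + c\,t^2\Var{X}\bigr)$ for the appropriate constant $c$. Plugging this back in, the $t\Ex{X}$ terms cancel and we are left with $\exp\bigl(c\,t^2\Var{X} - t\delta\bigr)$.

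The third step is simply to substitute $t = \lambda/2$. This gives $\exp\bigl(c \lambda^2 \Var{X}/4 - \lambda\delta/2\bigr)$, and with $c=1$ this is exactly the claimed bound $e^{\lambda^2\Var{X}/4 - \lambda\delta/2}$. So the real content is to make sure the constant in the quadratic bound on $-\log(1-u)$ over $u\in[0,1/2]$ comes out to at most $1$ after the $t=\lambda/2$ substitution — i.e., I need $\sum_{k\geq 2} u^k/k \leq u^2$ for $u \in [0,1/2]$, which holds since $\sum_{k \ge 2} u^k/k \le \tfrac{1}{2}\sum_{k\ge 2}u^k = \tfrac{u^2}{2(1-u)} \le u^2$. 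One should also note the bound is only interesting for $\delta > \lambda\Var{X}/2$ (otherwise the RHS exceeds $1$); no restriction on the sign of $\delta$ is needed since the statement is vacuous for $\delta$ small or negative.

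I do not anticipate a genuine obstacle here — this is a routine exponential-tail computation. The only place to be careful is that the exponential rates $\lambda_i$ may differ, so the bound must be phrased in terms of $\Ex{X}$ and $\Var{X}$ rather than $N/\lambda$ and $N/\lambda^2$; keeping the per-variable estimate in terms of $1/\lambda_i$ and $1/\lambda_i^2$ and only invoking $\lambda_i \ge \lambda$ at the very end (to justify $t = \lambda/2 \le \lambda_i/2$) handles this cleanly.
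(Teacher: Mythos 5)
Your proof is correct and follows essentially the same route as the paper: exponentiate with $t=s=\lambda/2$, apply Markov, and bound each factor $\lambda_i/(\lambda_i-s)$ by $e^{s/\lambda_i+s^2/\lambda_i^2}$ before summing to recover $\Ex{X}$ and $\Var{X}$. The only cosmetic difference is that you derive the key inequality $-\log(1-u)\le u+u^2$ on $[0,1/2]$ directly from the Taylor series, whereas the paper invokes it in the equivalent closed form $1/(1-x)\le e^{x+x^2}$.
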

 \begin{proof}
 Let $X=\sum_{i=1}^{k}X_i$, with $X_i\eqd\Exp(\lambda_i)$ being independent exponential random variables with parameter $\lambda_i\geq\lambda$.
 Define $s\coloneqq\lambda/2$ and $\mu\coloneqq\Ex{X}=\sum_{i=1}^k\frac{1}{\lambda_i}$.
 By Markov's inequality, we have
 \begin{equation*}
 \begin{aligned}
        \Pr{X\geq\mu+\delta}
  =    \Pr{e^{sX}\geq e^{s\mu+s\delta}}
   \leq \frac{\Ex{e^{sX}}}{e^{s\mu+s\delta}}
  =    e^{-s\mu-s\delta}\cdot\prod_{i=1}^k\frac{\lambda_i}{\lambda_i-s}
 .
 \end{aligned}
 \end{equation*}
 Note that $s/\lambda_i\leq1/2$.
 Thus, using the inequality $1/(1-x)\leq e^{x+x^2}$ which holds for all $x\in\intcc{0,1/2}$, we have $\frac{\lambda_i}{\lambda_i-s}\leq e^{s/\lambda_i+s^2/\lambda_i^2}$.
 With this, we conclude
 \begin{equation*}
 \begin{aligned}
 &
 \Pr{X\geq\mu+\delta}
 \leq
 e^{-s\mu-s\delta}\cdot\prod_{i=1}^{k}e^{s/\lambda_i+s^2/\lambda_i^2} 
 \leq
 e^{-s\mu-s\delta+s\mu+s^2\Var{X}}
 =
 e^{\lambda^2\Var{X}/4-\lambda\delta/2}
 .\qedhere
 \end{aligned}
 \end{equation*}
 \end{proof}

\begin{lemma}[Concentration: Sum of Independent geometric random variables]\label{lem:concentration_geometric}
Let $Y_1,\dots,Y_k$ be independent geometric random variables with parameter $p\in\intco{0,1}$.
Define $L \coloneqq - \ln(1-p)$,
and let $c_1,\dots,c_k,M,S,V$ be positive constants satisfying
$M \coloneqq \max_i c_i$,
$S \geq \sum_i c_i$, and
$V \geq \sum_i c_i^2$.
Then for any $t$ we have
\begin{equation}
\Pr{\sum_i c_i Y_i \geq t}
\leq
\exp \left (
\frac{V}{4M^2}+\frac{S+SL-tL}{2M}
\right).
\end{equation}
\end{lemma}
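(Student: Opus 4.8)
The plan is to reduce this to Lemma~\ref{lem:concentration_exponentials} by coupling each geometric variable with an exponential one. The pivotal observation is that if $Z\sim\Exp(L)$ with $L=-\ln(1-p)$, then $\lceil Z\rceil$ is geometric with parameter $p$, since $\Pr{\lceil Z\rceil>k}=\Pr{Z>k}=e^{-Lk}=(1-p)^k$. Consequently, one can couple the $Y_i$ with an independent family $Z_i\sim\Exp(L)$ so that $Y_i\leq\lceil Z_i\rceil\leq Z_i+1$ holds pointwise (with equality $Y_i=\lceil Z_i\rceil$ if ``geometric'' counts trials, and $Y_i+1=\lceil Z_i\rceil$ if it counts failures). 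Setting $X\coloneqq\sum_i c_i Z_i$, this gives $\sum_i c_i Y_i\leq X+\sum_i c_i\leq X+S$ under the coupling, hence $\Pr{\sum_i c_i Y_i\geq t}\leq\Pr{X\geq t-S}$.

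Next I would observe that $X$ has exactly the form required by Lemma~\ref{lem:concentration_exponentials}: each $c_iZ_i\sim\Exp(L/c_i)$ with parameter $L/c_i\geq L/M$, so $X$ is a sum of independent exponentials each with parameter at least $\lambda\coloneqq L/M$. I would then compute $\Ex{X}=\sum_i c_i/L\leq S/L$ and $\Var{X}=\sum_i c_i^2/L^2\leq V/L^2$, write $t-S=\Ex{X}+\delta$ with $\delta=t-S-\Ex{X}$, and apply Lemma~\ref{lem:concentration_exponentials} to obtain $\Pr{X\geq t-S}\leq\exp\bigl(\lambda^2\Var{X}/4-\lambda\delta/2\bigr)=\exp\bigl(\lambda^2\Var{X}/4-\lambda(t-S)/2+\lambda\Ex{X}/2\bigr)$. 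Substituting $\lambda=L/M$ and the bounds on $\Ex{X},\Var{X}$ bounds the three summands in the exponent by $V/(4M^2)$, $(SL-tL)/(2M)$, and $S/(2M)$ respectively, whose sum is exactly $V/(4M^2)+(S+SL-tL)/(2M)$, the claimed bound.

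The steps requiring care rather than depth are: (i) the direction of the inequalities — in particular, $\delta$ may be negative, but Lemma~\ref{lem:concentration_exponentials} is stated for \emph{any} $\delta$, and when $\delta<0$ the right-hand side already exceeds $1$ so the inequality is trivial; (ii) the degenerate case $p=0$, where $L=0$ and $Y_i=\infty$ almost surely, so the left-hand side is $1$ while the right-hand side is $\exp\bigl(V/(4M^2)+S/(2M)\bigr)\geq1$ because all of $M,S,V$ are positive; and (iii) stating the geometric/exponential coupling precisely under whichever convention of ``geometric random variable'' is in force. I expect the main (mild) obstacle to be the bookkeeping in (i) and (iii); the heart of the argument is the one-line coupling $Y_i\leq Z_i+1$ followed by the reduction to Lemma~\ref{lem:concentration_exponentials}.
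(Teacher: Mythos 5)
Your proposal is correct and follows essentially the same approach as the paper: reduce $Y_i$ to exponentials via the relation between the geometric and exponential distributions (the paper uses the stochastic-domination inequality $\Pr{c_i(Y_i-1)\geq t}\leq\Pr{X_i\geq t}$ for $X_i\sim\Exp(L/c_i)$, while you phrase the same fact as a pointwise coupling $Y_i\leq\lceil Z_i\rceil\leq Z_i+1$), then apply Lemma~\ref{lem:concentration_exponentials} with $\lambda=L/M$, $\Ex{\sum X_i}\leq S/L$, $\Var{\sum X_i}\leq V/L^2$, and $\delta=t-S-\Ex{\sum X_i}$. The substitutions and bookkeeping in the exponent match the paper line for line, and your side remarks about $\delta<0$ and $p=0$ are harmless extras.
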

 \begin{proof}
 For each $i$, define $Z_i \coloneqq Y_i - 1$.
 Let $X_1,X_2,\dots,X_k$ be independent exponential random variables such that $X_i$ has parameter $L/c_i$.
 Then,
 \begin{align*}
        \Pr{c_i Z_i \geq t}
 =    \Pr{Z_i \geq t/c_i}
   =    {(1-p)}^{\ceil{t/c_i}}
  \leq {(1-p)}^{t/c_i}
   =    e^{-Lt/c_i}
   =    \Pr{X_i \geq t}
 \end{align*}
 holds for all $i$ and $t$.
 Therefore,
 \begin{align*}
 \Pr{\sum_i c_i Y_i \geq t}
 & \leq
 \Pr{\sum_i c_i Z_i \geq t-S} \\
 & \leq
 \Pr{\sum_i X_i \geq t-S}.
 \end{align*}
 We will use \autoref{lem:concentration_exponentials} to bound the right hand side here.
 Note that all $X_i$ have parameters $\geq L/M \eqqcolon \lambda$.
 Moreover,
 $\Ex{\sum_i X_i} \leq S / L$
 and
 $\Var{\sum_i X_i} \leq V / L^2$.
 Therefore, \autoref{lem:concentration_exponentials} gives that
 \begin{align*}
         & \Pr{\sum_i X_i \geq t-S}\\
 {}\leq{}& \exp\left(\frac{\lambda^2 \Var{\sum_i X_i}}{4}-\frac{\lambda(t-S-\Ex{\sum_i X_i})}2\right)\\
 {}\leq{}& \exp\left(\frac{V}{4M^2}+\frac{S+SL-tL}{2M}\right)
 ,
 \end{align*}
 as required.
 \end{proof}

\begin{lemma}
\label{lem:exptowhp}
Let $ d_1 \leq d_2$ and suppose that, for any initial $d_2$-balanced configuration, the expected time to reach a $d_1$-balanced configuration is $t$.
Then, for any $d_2$-balanced configuration, the time to reach a $d_1$-balanced configuration is at most $2 t \log_2 n$ with high probability.
\end{lemma}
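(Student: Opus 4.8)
The plan is to use a standard boosting argument: repeatedly \enquote{restart} the process in blocks of length $2t$ and use Markov's inequality to show each block succeeds with constant probability, then amplify over $\LDAUTheta{\log n}$ blocks.

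First I would observe the key monotonicity fact, already noted in the protocol description: the discrepancy of $\shortlocalsearch$ never increases. Hence, once the process reaches a $d_1$-balanced configuration, it stays $d_1$-balanced forever. In particular, it suffices to show that within time $2t\log_2 n$ the process visits a $d_1$-balanced configuration at \emph{some} point.

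Next, partition the time horizon $[0, 2t\log_2 n]$ into $\log_2 n$ consecutive intervals $I_1, \dots, I_{\log_2 n}$, each of length $2t$. Call interval $I_j$ \emph{successful} if the configuration at its right endpoint is $d_1$-balanced (equivalently, since discrepancy is non-increasing, if a $d_1$-balanced configuration is reached at some point during $I_j$). By the Markov property of the continuous-time process, conditioned on the configuration $\bm\ell$ at the start of $I_j$ not yet being $d_1$-balanced, the configuration at the start of $I_j$ is still $d_2$-balanced (discrepancy never increases, so it is at most $\disc(\bm\ell(0))\le d_2$); hence by hypothesis the expected time from there to reach a $d_1$-balanced configuration is at most $t$, and by Markov's inequality the probability this takes longer than $2t$ is at most $1/2$. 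Therefore, conditioned on all of $I_1, \dots, I_{j-1}$ being unsuccessful, the probability that $I_j$ is also unsuccessful is at most $1/2$. Iterating, the probability that all $\log_2 n$ intervals are unsuccessful is at most $2^{-\log_2 n} = 1/n$, so with probability at least $1-1/n$ some interval is successful and a $d_1$-balanced configuration is reached by time $2t\log_2 n$.

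The only mild subtlety — and the one place to be careful — is making the conditioning rigorous: one must verify that the bound \enquote{expected time $\le t$ from any $d_2$-balanced start} can legitimately be applied at the (deterministic) time $j\cdot 2t$ to the random current configuration, using the time-homogeneous strong Markov property of the process together with the fact that the current configuration is always $d_2$-balanced (by discrepancy monotonicity, whatever it is, it is a valid $d_2$-balanced configuration from which the hypothesis applies). Once that is in place, the union/iteration over the $\log_2 n$ blocks is routine. (If $\log_2 n$ is not an integer one takes $\ceil{\log_2 n}$ blocks, which only helps; and the statement as phrased, \enquote{with high probability,} matches the $1-1/n$ bound obtained.)
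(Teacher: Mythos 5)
Your proof is correct and follows essentially the same route as the paper's: partition the time horizon into $\log_2 n$ blocks of length $2t$, observe that the configuration at the start of each block is still $d_2$-balanced by discrepancy monotonicity, apply Markov's inequality to get a success probability of $1/2$ per block, and multiply. Your closing remark about the time-homogeneous Markov property is a reasonable extra word of caution, but the paper treats this as immediate (``regardless of the history'') and you need not belabor it.
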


 \begin{proof}
 The crucial observation is that since $\bm{\ell}(0)$ is $d_2$-balanced, so is $\bm{\ell}(t)$ for any $t\geq0$.
 We partition the time interval $\intco{0,2 t \log_2 n}$
 into $\log_2 n$ epochs $\intco{0,2t}$, $\intco{2t,4t}$, and so on.
 We say the $i$-th epoch $\intco{2(i-1)t, 2i t}$ is \emph{successful} if $\disc(\bm{\ell}(2it)) \leq d_1$.
 Since we deterministically have $\disc(\bm{\ell}(2(i-1)t))\leq d_2$,
 regardless of the history up to time $2(i-1)t$,
 by Markov's inequality the probability that the $i$-th epoch is successful is at least $1/2$.
 Hence, the probability that none of the $\log_2 n$ epochs are successful is bounded by
 ${(1/2)}^{\log_2 n} =1/n$, as required.
 \end{proof}

\begin{lemma}
\label{lem:whptoexp}
Let $d_1 \leq d_2$ and suppose that, for any initial $d_2$-balanced configuration, the time to reach a $d_1$-balanced configuration is at most $t$ with probability at least $p$.
Let $Y$ be a geometric random variable with parameter $p$.
Then, for any $d_2$-balanced configuration, the time to reach a $d_1$-balanced configuration is stochastically dominated by $tY$, and so has expected value at most $t/p$.
\end{lemma}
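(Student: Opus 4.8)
The plan is to exploit the monotonicity of \shortlocalsearch together with the Markov property to reduce the hitting time to the first success in a sequence of Bernoulli trials. First I would record the elementary but essential observation (already used in the proof of \cref{lem:exptowhp}): since the discrepancy never increases under \shortlocalsearch, any configuration reached from a $d_2$-balanced start is again $d_2$-balanced; in particular $\bm{\ell}(it)$ is $d_2$-balanced for every $i\in\N_0$, and once a $d_1$-balanced configuration has been reached it remains $d_1$-balanced forever.

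Next I would partition $\intco{0,\infty}$ into epochs $\intco{(i-1)t,\,it)$ for $i\in\N$ and condition on the configuration $\bm{\ell}((i-1)t)$ at the start of the $i$-th epoch. By the observation above this configuration is $d_2$-balanced, so the hypothesis of the lemma applies verbatim: regardless of the history up to time $(i-1)t$, with probability at least $p$ the process reaches a $d_1$-balanced configuration within the $i$-th epoch, i.e.\ by time $it$. Letting $N$ be the index of the first successful epoch, the time to reach a $d_1$-balanced configuration is at most $Nt$. It then remains to show $N\leqd Y$ where $Y$ is geometric with parameter $p$; combining this with the monotonicity gives that the hitting time is $\leqd tN\leqd tY$, and taking expectations with $\Ex{Y}=1/p$ yields the claimed bound $t/p$.

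The only point needing care — and the ``main obstacle'', such as it is in this routine lemma — is making the domination $N\leqd Y$ rigorous, since the per-epoch success probabilities are merely lower-bounded by $p$ uniformly over the random (but always $d_2$-balanced) state at the epoch's start, rather than being exactly $p$ or literally independent. This is handled by the standard coupling: inductively reveal the epoch outcomes, and to epoch $i$ attach an independent $\mathrm{Bernoulli}(p)$ variable $B_i$ coupled so that $B_i=1$ forces epoch $i$ to be successful (possible since the conditional success probability is at least $p$). The $B_i$ are i.i.d.\ $\mathrm{Bernoulli}(p)$, so the index of their first $1$ is exactly geometric with parameter $p$ and dominates $N$, completing the argument.
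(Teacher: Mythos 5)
Your proof is correct and follows essentially the same route as the paper: partition time into length-$t$ epochs, use monotonicity of the discrepancy to keep the process $d_2$-balanced at each epoch boundary, apply the hypothesis conditionally to get a per-epoch success probability of at least $p$, and dominate the index of the first successful epoch by a geometric random variable. The explicit Bernoulli coupling you add is a detail the paper leaves implicit, but it does not change the argument.
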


 \begin{proof}
 The crucial observation is that since $\bm{\ell}(0)$ is $d_2$-balanced, so is $\bm{\ell}(t)$ for any $t\geq0$.
 We partition the time interval $\intco{0,\infty}$
 into epochs $\intco{0,t}$, $\intco{t,2t}$, and so on.
 We say the $i$-th epoch $\intco{(i-1)t, i t}$ is \emph{successful} if $\disc(\bm{\ell}(it)) \leq d_1$.
 Since we deterministically have $\disc(\bm{\ell}((i-1)t))\leq d_2$,
 regardless of the history up to time $(i-1)t$,
 the probability that the $i$-th epoch is successful is at least $p$.
 So, the index of the first successful epoch is dominated by a geometric random variable with parameter $p$, which has expected value $1/p$.
 \end{proof}

\section{Analysis of \shortlocalsearch}\label{sec:analysis:homogeneous}
In this section we analyze the \shortlocalsearch protocol.
For simplicity we would like to assume that $m \geq 2n$ and that $n$ divides $m$. The following two lemmas justify these assumptions.
\begin{lemma}
Suppose $m \leq n$ and let $T$ be the time until perfect balance.
Then $\Ex{T} \leq \LDAUOmicron{n}$ and $T \leq \LDAUOmicron{n \ln(n)}$ w.h.p.
\end{lemma}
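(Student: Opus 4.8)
The plan is to forget about the discrepancy and instead track the \emph{surplus} $s\coloneqq\sum_{i\in\intcc{n}}\max(\ell_i-1,0)$, i.e.\ the number of balls sitting above a floor of one ball per occupied bin. Since $m\leq n$ forces $\avg=m/n\leq1$, a configuration is perfectly balanced if and only if every load is $0$ or $1$, that is, iff $s=0$; hence $T$ is the hitting time of $\{s=0\}$. Inspecting a single \shortlocalsearch step shows that $s$ never increases, and that it drops by exactly one precisely when a ball leaves a bin of load $\geq2$ and lands in an empty bin; every other successful move either permutes balls among bins of load $\geq2$ or is a neutral move into a bin of load $1$, and none of these changes $s$.

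The heart of the argument is a lower bound on the rate at which $s$ decreases. Write $b$ for the number of occupied bins; from $m=b+s$ and $m\leq n$ we get that the number of \emph{empty} bins is $n-b=(n-m)+s\geq s$. If moreover $a\geq1$ denotes the number of bins of load $\geq2$, these bins together hold $s+a\geq s+1$ balls, and each such ball migrates to a uniformly random empty bin whenever its rate-$1$ clock rings and an empty bin is sampled. Consequently, whenever $s=k\geq1$ the instantaneous rate of an $s$-decreasing move is at least $(k+1)\cdot\frac{k}{n}\geq k^2/n$, and --- crucially --- this bound is uniform over all (history-dependent) configurations with surplus $k$. Comparing this point process with a Poisson process of rate $k^2/n$ then shows that the time to pass from surplus $k$ to surplus $k-1$ is stochastically dominated by an $\Exp(k^2/n)$ variable, and a standard coupling lets us take these waiting times independent.

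Summing up: the initial surplus is at most $m-1\leq n-1$, so $T$ is stochastically dominated by $\sum_{k=1}^{n-1}X_k$ with independent $X_k\sim\Exp(k^2/n)$. Linearity gives $\Ex{T}\leq\sum_{k\geq1}n/k^2=\frac{\pi^2}{6}\,n=\LDAUOmicron{n}$. For the high-probability bound, observe that every configuration with $m\leq n$ balls is trivially $n$-balanced and that perfect balance coincides with being $(1-\varepsilon)$-balanced for a suitable constant $\varepsilon=\varepsilon(n,m)>0$ (the achievable discrepancies form a finite set of reals, so the largest one below $1$ is positive); \cref{lem:exptowhp} then upgrades the $\LDAUOmicron{n}$ expectation to $T=\LDAUOmicron{n\ln n}$ w.h.p. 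Alternatively one can feed $\sum_k X_k$ into \cref{lem:concentration_exponentials}: each $X_k$ has parameter $\geq1/n$ and $\Var{\sum_k X_k}=n^2\sum_k k^{-4}=\LDAUOmicron{n^2}$, so $\Pr{\sum_k X_k\geq\Ex{\sum_k X_k}+\delta}\leq\exp(\LDAUOmicron{1}-\delta/(2n))$, which falls below any fixed inverse polynomial in $n$ once $\delta=\Theta(n\ln n)$.

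The one genuinely delicate point is the rate bound of the second paragraph, and in particular the choice of potential: one needs a quantity for which both the amount of \emph{movable mass} and the number of available \emph{targets} scale linearly in the current value $k$, since this is exactly what produces the summable terms $n/k^2$. A cruder accounting --- e.g.\ merely noting that some overloaded bin and some empty bin exist --- gives only rate $\LDAUOmega{1/n}$ and the useless bound $\LDAUOmicron{n^2}$; tracking the discrepancy or the number of occupied bins directly runs into the same issue.
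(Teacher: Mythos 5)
Your argument is correct, and it arrives at the same rate calculation as the paper but by a genuinely different route. The paper first invokes the \dml (\cref{lem:coupling}) to reduce to the extremal configuration in which all $m$ balls sit in bin~1, then tracks the load of that single bin: when it holds $r$ balls there are at least $r-1$ empty bins, giving rate $r(r-1)/n$, and summing $\sum_{r\geq 2} n/(r(r-1))$ yields $O(n)$. You instead track the surplus $s=\sum_i\max(\ell_i-1,0)$ directly in the \emph{original} process, observe it is monotone nonincreasing, and show that whenever $s=k\geq1$ there are at least $k$ empty bins and at least $k+1$ balls sitting in bins of load $\geq2$, so the rate at which $s$ drops is at least $k(k+1)/n$. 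These are of course the same numbers (the paper's $r$ is your $k+1$), but your route sidesteps \shortdml entirely: you never need to modify the process or justify ignoring moves, because you only ever \emph{lower-bound} the rate of a monotone potential. The price you pay is the (standard, but worth stating) sequential-domination/coupling step that turns conditional exponential bounds on the inter-decrement times into an independent sum $\sum_k X_k$; the paper buys exactly this concreteness by paying with one application of \shortdml. Both the expectation bound and both of your routes to the w.h.p.\ bound (via \cref{lem:exptowhp} with an appropriate $d_1<1$, or directly via \cref{lem:concentration_exponentials}) are sound; the paper uses the former. One cosmetic remark: your classification of $s$-preserving moves as ``permutes balls among bins of load $\geq2$ or is a neutral move into a bin of load~1'' is slightly imprecise (a move from load~5 to load~1 is neither), but the substantive claim --- that $s$ changes only when a ball lands in an \emph{empty} bin, and then by $-1$ --- is exactly right and is all you use.
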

\begin{proof}
By \cref{lem:coupling}, we may assume that all balls start in the first bin, and that we may wait for each of the $m$ balls to move to $m$ distinct empty bins (and ignore any other move).
Note that this is possible since $m \leq n$.
If there are $2\leq r\leq m$ balls left in the first bin,
then there are at least $r-1$ empty bins, hence
the time it takes for one of the $r$ balls to be activated and choose an empty bin is an exponential with rate $r \times (r-1)/n $.
Therefore, the total expected time for the $m$ balls to choose $m$ distinct empty bins is at most
\begin{equation}
  \sum_{r = 2}^m n/r(r-1)
< \sum_{r = 1}^{\infty} 2n/r^2 = \LDAUOmicron{n}.
\end{equation}
This shows $\Ex{T}=\LDAUOmicron{n}$.
\cref{lem:exptowhp} then implies $T \leq \LDAUOmicron{n \ln(n)}$ w.h.p.
\end{proof}
Note that the above lemma implies \cref{thm:main} when $m\leq n$.
The next lemma implies that, in order to prove \cref{thm:main}, it is sufficient to consider the case when $n$ divides $m$.
\begin{lemma}
Suppose $m \geq n$ and write $m = kn + r$ for some $k \in \N$ and $r\in\set{0,\dots,n-1}$.
Let $T$ be the time until perfect balance.
Furthermore, suppose that for any configuration with $n$ bins and $kn$ balls, \shortlocalsearch balances in expected time at most $f(n,kn)$ and in time at most $g(n,kn)$ w.h.p.
Then $\Ex{T} \leq \LDAUOmicron{\ln(n)} + f(n,kn)$ and $T \leq \LDAUOmicron{\ln(n)} + g(n,kn)$ w.h.p.
\end{lemma}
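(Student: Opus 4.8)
The plan is to split the execution of \shortlocalsearch into a brief \emph{spreading phase} followed by a \emph{reduction phase}. If $r=0$ there is nothing to prove, so assume $1\le r\le n-1$. First I would run the process until the stopping time $\tau$ at which the configuration has at least $r$ non-empty bins, and argue $\Ex{\tau}=\LDAUOmicron{\ln n}$ and $\tau=\LDAUOmicron{\ln n}$ w.h.p.\ Then, from the configuration reached at time $\tau$, I would couple the continuation of \shortlocalsearch with a run of \shortlocalsearch on $kn$ balls in such a way that the latter reaching perfect balance forces the former to do so; applying the hypothesised bounds $f(n,kn)$, $g(n,kn)$ to the $kn$-ball run and combining the two phases --- linearity of expectation for $\Ex{T}$, and a union bound together with the strong Markov property at $\tau$ for the w.h.p.\ bound --- then gives the claim.

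For the spreading phase, I would first observe that the number of non-empty bins never decreases under \shortlocalsearch: a ball can leave a bin of load exactly $1$ only by moving into an empty bin, since any other destination makes the move invalid; hence once $r$ bins are non-empty they stay non-empty. Next, whenever there are $j<r$ non-empty bins there is at least one empty bin and at least $m-j\ge kn+1\ge n+1$ balls sitting in bins of load $\ge2$ (at most $j$ balls occupy a bin by themselves), and each such ball, once activated, moves into a fixed empty bin with probability $1/n$; so the rate at which the count of non-empty bins increases is at least $(m-j)(n-j)/n\ge n-j\ge2$. Consequently $\tau$ is stochastically dominated by a sum of at most $r\le n$ independent exponential random variables with rates in $\set{n-r+1,\dots,n}$ (all $\ge2$), the reciprocals of which sum to at most $H_n$; thus $\Ex{\tau}\le H_n=\LDAUOmicron{\ln n}$, and \cref{lem:concentration_exponentials} yields $\tau=\LDAUOmicron{\ln n}$ w.h.p.

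For the reduction phase, let $\bm{\ell}$ be the configuration at time $\tau$; pick any $r$ of its non-empty bins as a set $R$ and put $\bm{\ell}'\coloneqq\bm{\ell}-\bmone_R$, a legal configuration of $kn$ balls. I would run \shortlocalsearch from $\bm{\ell}$ and from $\bm{\ell}'$ in parallel and maintain the invariant that, at all times, $\bm{\ell}$ equals $\bm{\ell}'$ plus one ``phantom'' ball in each of $r$ distinct bins (a set $R(t)$ that gets re-labelled as the processes evolve). The activation rate of bin $i$ in the $\bm\ell$-process exceeds that in the $kn$-ball process by exactly one if $i\in R$ and zero otherwise, so the coupling is the natural one: a shared activation stream of rate $\ell'_i/n$ per ordered pair $(i,j)$ that drives both processes, plus a phantom stream of rate $1/n$ per pair $(i,j)$ with $i\in R$ that moves a phantom ball only in the $\bm{\ell}$-process. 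On a shared activation the two move rules can disagree only when one side has a neutral move available; in those cases re-labelling which ball is the phantom (letting it travel with the moving ball, or swapping its phantom status with a ball in the destination bin) restores the invariant and keeps the $\bm{\ell}$-step a legal \shortlocalsearch move. The one event that genuinely breaks the invariant is a phantom ball entering another bin of $R$, which would put two phantoms in one bin; there I would appeal to \cref{lem:coupling} and let the adversary reverse that move by the matching destructive move. This is the only place the adversary is used, so by \cref{lem:coupling} the coupled $\bm{\ell}$-process is stochastically no faster to balance than the genuine \shortlocalsearch from $\bm{\ell}$. Because the $\bm{\ell}'$-process is untouched by all of this, it is an honest run of \shortlocalsearch on $kn$ balls; when it first reaches perfect balance (all loads $k$) the invariant forces $\bm{\ell}$ to have $r$ bins of load $k+1$ and the rest of load $k$, i.e.\ to be perfectly balanced. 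Hence the reduction phase is stochastically dominated by the balancing time of \shortlocalsearch on $kn$ balls from $\bm{\ell}'$, which by hypothesis is $\le f(n,kn)$ in expectation and $\le g(n,kn)$ w.h.p., uniformly in the starting configuration $\bm{\ell}'$ and hence also after taking expectation over the random configuration reached at time $\tau$; adding the spreading-phase bound gives $\Ex{T}\le\LDAUOmicron{\ln n}+f(n,kn)$ and $T\le\LDAUOmicron{\ln n}+g(n,kn)$ w.h.p.

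The main obstacle is the coupling in the reduction phase: I expect the work to lie in a case analysis, in the style of the proof of \cref{lem:coupling} but shorter, verifying that re-labelling the phantom set on the neutral-move disagreements, together with the single adversarial reversal, preserves $\bm{\ell}(t)=\bm{\ell}'(t)+\bmone_{R(t)}$ for all $t$ while never disturbing the $kn$-ball process on the side, so that the hypothesis may indeed be applied to it.
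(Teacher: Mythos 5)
Your proposal follows the same two-phase decomposition as the paper (spread $r$ balls into distinct bins, then reduce to a $kn$-ball instance) and invokes the \shortdml in the same places, so it is essentially the same approach; the differences are in how explicitly the two steps are argued. For the spreading phase, the paper first applies \cref{lem:coupling} to collapse to the all-in-one-bin configuration and then tracks the $r$ departures from bin~1 as a sum of exponentials $T_i$ with parameters $(kn+r-i+1)(n-i)/n$, whereas you avoid that initial reduction by observing that the number of non-empty bins is monotone under \shortlocalsearch and bounding the rate at which it grows directly; both give $\LDAUOmicron{\ln n}$ in expectation and w.h.p. For the reduction phase, the paper compresses the coupling into the single remark that running the protocol \enquote{assuming it had only $kn$ balls \dots\ can only slow down the protocol by \cref{lem:coupling}}, while you spell out the coupling that this remark is standing in for: the phantom-ball invariant $\bm{\ell}(t)=\bm{\ell}'(t)+\bmone_{R(t)}$, the re-labelling of $R$ on neutral-move disagreements, and the single adversarial reversal when a phantom would enter another $R$-bin. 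Your write-up therefore fills in details that the paper leaves implicit rather than taking a genuinely different route.
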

\begin{proof}
By \cref{lem:coupling}, we may assume that all balls start in the first bin.
We first wait for  $r$ balls in the first bin to move to $r$ distinct empty bins (and ignore any other move).
Once each of these balls finds a bin, we no longer allow it to move.
After these $r$ balls are moved to distinct bins, we run the \shortlocalsearch protocol assuming it had only $kn$ balls
(these assumptions can only slow down the protocol by \cref{lem:coupling}).
To complete the proof, we need only show that
the running time of the initial phase is $\LDAUOmicron{\ln n}$ in expectation and with high probability.
Denote this running time by $T'$.

Note that $T' = \sum_{i=1}^r T_i$, where $T_i$ is the time for the $i$th ball to activate and choose an empty bin.
Observe that $T_i$ is an exponential random variable
with parameter $(kn+r-i+1)(n-i)/n > n-i$.
Thus,
\begin{align}
      \Ex{T'}
&< \sum_{i=1}^r \frac{1}{n-i} \leq \LDAUOmicron{\ln(n)}\textnormal{, and}\\
      \Var{T'}
&< \sum_{i=1}^r \frac{1}{{(n-i)}^2} \leq \LDAUOmicron{1}
.
\end{align}
By concentration of sums of exponential random variables  (see \cref{lem:concentration_exponentials}),
we have $T'=\LDAUOmicron{\ln n}$ w.h.p.
\end{proof}

In light of the previous two lemmas, in the following we assume $m\geq n$ and that $n$ divides $m$.
Our analysis of \shortlocalsearch proceeds via the following three phases:
\begin{description}
\item[Phase 1:] In \cref{sec:analysis:homogeneous:phase1} we show that, from any initial configuration, w.h.p.\@ it takes time $\LDAUOmicron{\ln n}$ to become $\LDAUOmicron{\ln n}$-balanced.
\item[Phase 2:] In \cref{sec:analysis:homogeneous:phase2} we show that, from any $\LDAUOmicron{\ln n}$-balanced configuration, it takes expected time $\LDAUOmicron{n/\avg}$ to become $1$-balanced.
\item[Phase 3:] In \cref{sec:analysis:homogeneous:phase3} we show that, from any $1$-balanced configuration, it takes expected time $\LDAUOmicron{n/\avg}$ to become perfectly balanced.
\end{description}
Standard arguments imply that Phase 1 takes expected time $\LDAUOmicron{\ln n}$, and that Phases 2 and 3 take time $\LDAUOmicron{\ln n \cdot n/\avg}$ w.h.p.\@ (see \cref{lem:exptowhp} and \cref{lem:whptoexp} in the appendix).
Since $\avg=m/n$, these imply the total time to reach perfect balance is in expectation $\LDAUOmicron{\ln(n)+n^2/m}$ and w.h.p.\@ $\LDAUOmicron{\ln(n)+\ln(n) \cdot n^2/m}$.

\subsection{Phase 1: Reaching an \texorpdfstring{$\LDAUOmicron{\ln n}$}{O(ln(n))}-balanced Configuration}\label{sec:analysis:homogeneous:phase1}
We first consider how long it takes to go from an arbitrary initial configuration to a configuration $\bm{\ell}$ with $\disc(\bm{\ell})=\LDAUOmicron{\ln n}$.
We distinguish between two cases, depending on whether $\avg$ is large or small.

\paragraph{Phase 1 for small $\avg$}
The easier case, for $\avg\leq16\cdot\ln n$, is covered by the following lemma.
\begin{lemma}\label{lem:smallm}
Assume $\avg\leq16\cdot\ln n$ and consider an arbitrary initial configuration $\bm{\ell}=\bm{\ell}(0)$.
Let $T\coloneqq\inf\set{t|\disc(\bm{\ell}(t))\leq96\cdot\ln n}$.
Then, w.h.p., $T=\LDAUOmicron{\ln n}$.
\end{lemma}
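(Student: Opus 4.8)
The plan is to apply \cref{lem:coupling} to replace \shortlocalsearch{} by a much simpler process. First, letting bin~$1$ be a fullest bin of $\bm{\ell}(0)$, repeatedly moving a ball from any other bin into bin~$1$ is a destructive move (bin~$1$ remains a fullest bin throughout), so by \cref{lem:coupling} we may assume all $m$ balls start in bin~$1$. Second, since the reversal of any \shortlocalsearch{} move is destructive, the adversary of \cref{lem:coupling} may cancel any move we find inconvenient; we use this to restrict attention to the \emph{reduced process} whose only moves are the \emph{first} departure attempts of balls from bin~$1$: when a ball first samples a bin other than bin~$1$, it moves there provided this is a legal \shortlocalsearch{} move and is then frozen forever, while a ball whose first such attempt is illegal is frozen in bin~$1$. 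Writing $\bm{\tilde{\ell}}(t)$ for the configuration of the reduced process, \cref{lem:coupling} gives $\disc(\bm{\ell}(t))\leqd\disc(\bm{\tilde{\ell}}(t))$ for every $t$; since the discrepancy never increases under \shortlocalsearch{}, the event $\set{T\leq t^\star}$ equals $\set{\disc(\bm{\ell}(t^\star))\leq96\ln n}$, so it suffices to show $\Pr{\disc(\bm{\tilde{\ell}}(t^\star))\leq96\ln n}\geq1-n^{-\LDAUOmega{1}}$ for some fixed $t^\star=\LDAUOmicron{\ln n}$.

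For each ball $j$, let $B_j\in\set{2,\dots,n}$ be the first bin other than bin~$1$ that $j$ samples, and let $\sigma_j$ be the time of that sample. By thinning the rate-$1$ Poisson clocks, the $\sigma_j$ are i.i.d.\ with distribution $\Exp((n-1)/n)$, the $B_j$ are i.i.d.\ uniform on $\set{2,\dots,n}$, and $(\sigma_j)_j$ is independent of $(B_j)_j$. Let $L$ be the least integer with $L\geq\avg+96\ln n$, and let $\mathcal{A}$ be the event that $\abs{\set{j:B_j=i}}\leq L$ for every $i\neq1$. Each $\abs{\set{j:B_j=i}}$ has distribution $\Bin(m,1/(n-1))$ with mean $m/(n-1)$; since $\avg\leq16\ln n$ we have $96\ln n\geq6\avg$, hence $L\geq7\avg\geq6m/(n-1)$ for all large $n$, so \cref{lem:chernoff} gives $\Pr{\abs{\set{j:B_j=i}}\geq L}\leq2^{-L}\leq n^{-\LDAUOmega{1}}$; a union bound over the bins $i\neq1$ yields $\Pr{\mathcal{A}}\geq1-n^{-\LDAUOmega{1}}$.

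The main step is to show that, on $\mathcal{A}$, the reduced process is $96\ln n$-balanced already at time $\tau\coloneqq\sigma_{(m-L)}$, the $(m-L)$-th smallest of the $\sigma_j$. On $\mathcal{A}$ every bin $i\neq1$ always has load at most $L$, because in the reduced process it only ever receives balls $j$ with $B_j=i$. Ordering the departure attempts by time, an induction shows that the first $m-L$ of them all succeed: at the $i$-th such attempt (with $i\leq m-L$) bin~$1$ still holds $m-(i-1)\geq L+1$ balls, which exceeds by at least one the load (at most $L$) of the sampled bin. Hence at time $\tau$ bin~$1$ has load exactly $L$ and every other bin has load at most $L$, so (using $\avg\leq16\ln n$) $\disc(\bm{\tilde{\ell}}(\tau))\leq L-\avg=96\ln n$, and the same holds at every later time because no ball ever enters bin~$1$ in the reduced process. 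Finally, $\tau\eqd\sum_{k=L+1}^m\Exp(k(n-1)/n)$ as a sum of spacings of order statistics, a sum of independent exponentials with $\Ex{\tau}\leq2H_m=\LDAUOmicron{\ln n}$, $\Var{\tau}\leq4/L=\LDAUOmicron{1/\ln n}$, and all parameters at least $(L+1)(n-1)/n=\LDAUOmega{\ln n}$; by \cref{lem:concentration_exponentials} we get $\tau\leq\Ex{\tau}+\LDAUOmicron{1}=\LDAUOmicron{\ln n}$ w.h.p. Taking $t^\star\coloneqq\Ex{\tau}+\LDAUOmicron{1}$ and using that $\tau$ is independent of $\mathcal{A}$, we obtain $\disc(\bm{\tilde{\ell}}(t^\star))\leq96\ln n$ with probability at least $\Pr{\mathcal{A}}\cdot\Pr{\tau\leq t^\star}\geq1-n^{-\LDAUOmega{1}}$, which is the claim.

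I expect the reduction step to require the most care: one must verify that the reduced process really is \shortlocalsearch{} run under an adversary that performs only destructive moves (so that \cref{lem:coupling} applies), which comes down to the two facts used above --- the reversal of a \shortlocalsearch{} move is destructive, and consolidating into a fullest bin uses only destructive moves. The second delicate point is the apparent circularity in the claim that every relevant departure from bin~$1$ succeeds; this is sidestepped by defining the stopping time $\tau$ purely through the order statistics of the times $\sigma_j$ --- which are independent of the event $\mathcal{A}$, a statement about the targets $B_j$ --- and only then checking, deterministically on $\mathcal{A}$, that no departure before $\tau$ can be illegal. The constants $16$ and $96$ are slack; they are chosen so that $L\geq6m/(n-1)$ and $L-\avg\geq96\ln n$ can hold simultaneously.
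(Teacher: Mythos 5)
Your proof is correct in its essentials and reaches the same conclusion, but it routes around the paper's key simplification. The paper uses the \dml in \emph{both} directions: besides cancelling unwanted moves, it also grants each bin-1 ball the right to migrate \emph{unconditionally} on its first non-bin-1 sample (a destructive move when the destination happens to be heavier), so that $m-\avg$ balls leave bin 1 by time $T'$ and the load of every other bin at time $T'$ is marginally $\Bin(m-\avg,1/(n-1))$; a single Chernoff bound then finishes. You instead invoke \cref{lem:coupling} only to \emph{cancel} moves (consolidate into bin~1, freeze each ball after its first departure attempt), and compensate for the lack of forced departures with the inductive argument that, on your good event $\mathcal A$, the first $m-L$ attempts are all legal because the target bin always holds at most $L$ balls while bin~1 still holds at least $L+1$. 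Your decoupling of $\tau$ (a function of the $\sigma_j$ only) from $\mathcal A$ (a function of the $B_j$ only) is the right way to sidestep the apparent circularity, and your order-statistics decomposition of $\tau$ coincides with the paper's sum of exponentials $T'$ up to the cutoff being $L$ rather than $\avg$. So the two proofs share a skeleton but you trade one line of DML-enabled simplification for a self-contained legality induction; your version is a touch longer, but it shows the \dml is not strictly needed in the ``add destructive moves'' direction for this lemma.

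One small blemish: with $L=\lceil \avg+96\ln n\rceil$ you only get $\disc(\bm{\tilde{\ell}}(\tau))\leq L-\avg$, and $L-\avg$ can exceed $96\ln n$ by up to $1$, whereas the lemma asks for $\disc\leq 96\ln n$. This is just constant bookkeeping: replace $96\ln n$ by, say, $95\ln n$ inside the ceiling (the resulting $L$ still satisfies $L\geq 6m/(n-1)$ for all large $n$, so \cref{lem:chernoff} applies), or argue as the paper does that once every bin's load is at most $96\ln n$ one has $\disc\leq 96\ln n-\avg$. Also worth noting explicitly, as you do implicitly, that the lower deviation $\avg-0\leq 16\ln n$ is dominated by the upper one, so the maximum in $\disc$ is attained on the heavy side.
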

\begin{proof}
By the assumption we have $\avg - 96\cdot\ln n\leq0$. Thus, $\ell_i(t) \geq 0
\geq \avg - 96\cdot\ln n$ for all bins $i\in\intcc{n}$ and times $t\geq0$.
Consequently, $T$ is the first time $t$ such that $\ell_{i}(t) \leq \avg +
96\cdot\ln n$ for all $i$. By \cref{lem:coupling}, we can assume that,
initially, all balls are in the same bin (by performing up to $m-1$ destructive
movements). So assume (w.l.o.g.) that all balls start in bin 1. Let us first
bound the time $T'$ until $m-\avg$ balls move from bin 1 to one of the other
$n-1$ bins. Applying \cref{lem:coupling} once more, we ignore movements of
balls to bin 1, movements between any of the remaining $n-1$ bins, and assume
that all movements from bin 1 to any of the remaining $n-1$ bins are successful.
If $T_i$ denotes the time in which the load of bin $1$ decreases from $i$ to
$i-1$, we have $T'=\sum_{i=\avg+1}^{m}T_i$. The different $T_i$ are independent
exponential random variables with parameter $i\cdot(n-1)/n$. This yields
\begin{align}
      \Ex{T'}
&=    \sum_{i=\avg+1}^m{\left(i\cdot\frac{n-1}{n}\right)}^{-1}
 \leq 2\cdot\ln n\textnormal{, and}\\
%
%
      \Var{T'}
&=    \sum_{i=\avg+1}^m{\left(i\cdot\frac{n-1}{n}\right)}^{-2}
 =    \LDAUOmicron{1/\avg}
.
\end{align}
By concentration of sums of independent exponential random variables (\cref{lem:concentration_exponentials} in Appendix~\ref{app:aux}), we have, w.h.p., $T'=\LDAUOmicron{\ln n}$.

To complete the proof, it suffices to show that w.h.p.\@ $T\leq T'$.
Whenever one of these $m-\avg$ balls is activated, we may assume it chooses one other bin uniformly at random and moves there (without checking the load).
While this might violate our original protocol (balls could move to a bin with a higher load), such a violation would be due to a destructive movement.
By \cref{lem:coupling}, this merely slows the process down.
Hence, at time $T'$, the number of balls in every other bin is $\Bin(m-\avg,1/(n-1))$, which has mean $(m-\avg)/(n-1)=\avg$.
Using a Chernoff bound (\cref{lem:chernoff}) with the union bound, the maximum of $n$ such binomials is not more than $96\cdot\ln n$ w.h.p.
\end{proof}

\paragraph{Phase 1 for large $\avg$}
We now turn to the more interesting case where $\avg>16\cdot\ln n$ and consider two subphases.
First, \cref{lem:first} shows that we reach a $\avg/2$-balanced configuration in time $\LDAUOmicron{\ln n}$.
The proof is basically identical to the proof of \cref{lem:smallm}, the only difference being that we use a different Chernoff bound at the end.
Afterward, \cref{lem:main} shows that it takes an additional $\LDAUOmicron{\ln n}$ time to become $\LDAUOmicron{\ln n}$-balanced.
\begin{lemma}\label{lem:first}
Assume $\avg>16\cdot\ln n$ and consider an arbitrary initial configuration $\bm{\ell}=\bm{\ell}(0)$.
Let $T\coloneqq\inf\set{t|\disc(\bm{\ell}(t))\leq\avg/2}$.
Then, w.h.p., $T=\LDAUOmicron{\ln n}$.
\end{lemma}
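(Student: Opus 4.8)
The plan is to follow the proof of \cref{lem:smallm} almost verbatim, changing only the concentration estimate at the very end. First I would invoke \cref{lem:coupling} to assume that initially all $m$ balls lie in bin~$1$ (at the cost of $m-1$ destructive moves), and then invoke it a second time to restrict attention to an adversarial run in which moves into bin~$1$ and moves among the remaining $n-1$ bins are ignored, while every move out of bin~$1$ is declared successful regardless of the destination's load; each such modification is a destructive move and hence can only slow the process down.

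Next I would bound the time $T'$ until $m-\avg$ balls have left bin~$1$. Letting $T_i$ be the length of the period during which bin~$1$ has load exactly $i$, we have $T'=\sum_{i=\avg+1}^{m}T_i$ with the $T_i$ independent and $T_i\eqd\Exp\bigl(i(n-1)/n\bigr)$. A routine calculation gives $\Ex{T'}=\frac{n}{n-1}(H_m-H_{\avg})\le 2\ln n$ and $\Var{T'}=\LDAUOmicron{1/\avg}$, and since each $T_i$ has parameter at least $(\avg+1)(n-1)/n=\LDAUOmega{\avg}$, \cref{lem:concentration_exponentials} yields $T'=\LDAUOmicron{\ln n}$ w.h.p. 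So far this is exactly as in \cref{lem:smallm}.

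The only new ingredient is the final step. Whenever one of the $m-\avg$ departing balls is activated we may, again by \cref{lem:coupling}, pretend it picks one of the other $n-1$ bins uniformly at random and moves there without checking loads; hence at time $T'$ each of those $n-1$ bins holds $\Bin\bigl(m-\avg,1/(n-1)\bigr)$ balls, of mean exactly $(m-\avg)/(n-1)=\avg$, while bin~$1$ holds exactly $\avg$. Since $\avg>16\ln n$, the two-sided Chernoff bound~\eqref{chernoffsmalldev} with $\varepsilon=1/2\in\intcc{0,3/2}$ gives, for each of these bins,
\begin{equation*}
\Pr{\abs{\Bin(m-\avg,1/(n-1))-\avg}>\avg/2}<2e^{-\avg/12}<2n^{-4/3},
\end{equation*}
so a union bound over all $n-1$ bins shows that with high probability every load lies in $\intcc{\avg/2,3\avg/2}$, so $\disc(\bm{\ell}(T'))\le\avg/2$. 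As in \cref{lem:smallm}, \cref{lem:coupling} then lets us conclude that $T\le T'$ w.h.p., and therefore $T=\LDAUOmicron{\ln n}$ w.h.p.

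I do not expect any step here to be genuinely difficult; the one thing to be careful about --- and the only real difference from \cref{lem:smallm} --- is that there the assumption $\avg\le16\ln n$ made the lower bound on the bin loads automatic (loads are always nonnegative, hence $\ge\avg-96\ln n$), so only the upper tail of the binomials was needed, whereas here $\avg$ may be much larger than $\ln n$ and we must also control the \emph{lower} tail, which is precisely why the two-sided form~\eqref{chernoffsmalldev} is used. Checking that the constants fit ($\varepsilon=1/2$, threshold $16\ln n$, target $\avg/2$) is immediate, since $\varepsilon^2\avg/3\ge(1/4)(16\ln n)/3>\ln n$.
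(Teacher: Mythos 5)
Your proof is correct and follows essentially the same approach as the paper: reduce to all balls in bin~$1$ via the DML, bound $T'$ by the sum of exponentials, majorize the final loads by $\Bin(m-\avg,1/(n-1))$, and finish with the two-sided Chernoff bound~\eqref{chernoffsmalldev}. The only cosmetic difference is that you take $\varepsilon=1/2$ directly (yielding the interval $[\avg/2,3\avg/2]$), whereas the paper takes $\varepsilon=2\sqrt{\ln n/\avg}$ (yielding the tighter interval $\avg\pm2\sqrt{\avg\ln n}$ before invoking $\avg>16\ln n$); both give the same per-bin failure probability $2n^{-4/3}$ and suffice for the claim.
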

\begin{proof}[Proof Sketch]
The proof is basically identical to the proof of \autoref{lem:smallm}, the only difference being that we use a Chernoff bound for large expected values (Inequality~(\ref{chernoffsmalldev})) at the end.
As in the previous proof, by \autoref{lem:coupling} (destructive movements) we can assume all balls to be in bin 1 at time 0, and define $T'$ as the time until $m-\avg$ balls move to one of the other $n-1$ bins.
The same calculations yield (w.h.p.) $T'\leq \LDAUOmicron{\ln n}$.
Once more, \autoref{lem:coupling} allows us to majorize the ball distribution in each of these $n-1$ remaining bins at time $T'$ by the binomial distribution $\Bin(m-\avg,1/(n-1))$ with mean $\avg$.
Applying a Chernoff bound (this time the variant for large expected values, i.e., Equation~\eqref{chernoffsmalldev}) and a union bound yields that, w.h.p., the load of all bins is within $\intcc{\avg-2\sqrt{\avg\ln n},\avg+2\sqrt{\avg\ln n}}$.
Since $\avg>16\ln n$, this means w.h.p.\ $\disc(\bm{\ell}(T'))\leq2\sqrt{\avg\ln n}\leq\avg/2$.
\end{proof}

\begin{lemma}\label{lem:main}
Assume $m>n$ and consider an initial configuration $\bm{\ell}=\bm{\ell}(0)$ with $\disc(\bm{\ell})\leq\avg/2$.
Let $T\coloneqq\inf\set{t|\disc(\bm{\ell}(t))\leq 8 \ln n}$.
Then, w.h.p., $T \leq \LDAUOmicron{\ln n}$.
\end{lemma}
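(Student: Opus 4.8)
The plan is to show that from a $\avg/2$-balanced configuration, \shortlocalsearch halves the discrepancy in $\LDAUOmicron{\ln n}$ time repeatedly, until the discrepancy drops to $\LDAUOmicron{\ln n}$; summing a geometric number of halving phases gives the bound. Concretely, I would first prove a single \emph{halving step}: starting from a configuration with $\disc(\bm\ell)\le d$ (where $d\ge 8\ln n$), it takes $\LDAUOmicron{\ln n}$ time w.h.p.\ to reach a configuration with $\disc\le d/2 + \LDAUOmicron{\sqrt{d\ln n}}$, which since $d\ge 8\ln n$ is at most, say, $(3/4)d$. Iterating this $\LDAUOmicron{\ln(\avg)} = \LDAUOmicron{\ln n}$ times (we start at $d_0 = \avg/2$ and stop once $d$ reaches $8\ln n$), and taking a union bound over the $\LDAUOmicron{\ln n}$ halving steps — each failing with probability $n^{-\LDAUOmega 1}$ — yields total time $\LDAUOmicron{\ln n}\cdot\LDAUOmicron{\ln n}$... which is too weak. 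So the halving phases must be \emph{telescoping} in length: the $j$-th halving step (reducing discrepancy from $\approx 2^{-j}\avg$ down) should take time only $\LDAUOmicron{\ln n / 2^{j}}$ or so, not a fresh $\LDAUOmicron{\ln n}$ each time, so that the geometric sum is $\LDAUOmicron{\ln n}$ overall. I would therefore aim to prove: from $\disc(\bm\ell)\le d$, reaching $\disc\le (3/4)d$ takes time $\LDAUOmicron{\ln(n)/d \cdot \avg}$... no — let me instead aim for the cleaner statement that each halving costs $\LDAUOmicron{1}$ expected time once $d$ is a constant factor below $\avg$, because the relevant exponential clocks have rate proportional to the number of overloaded balls, which is $\Theta(\avg)$ large. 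The precise bookkeeping of per-phase lengths is the crux and I return to it below.

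For the single halving step itself, the mechanism mirrors \cref{lem:smallm} and \cref{lem:first}: invoke \cref{lem:coupling} to reduce to a worst-case instance for this phase. Here the natural worst case is that all the ``excess'' mass sits in one overloaded bin (load $\avg+d$) and all the ``deficit'' is spread so that $n-1$ bins have load near $\avg - d/(n-1)$, or more simply — via destructive moves — push the configuration to one bin with load $\avg + d$ and the rest at load $\avg - d/(n-1) \approx \avg$. Then, exactly as before, I would ignore all moves except those emptying the overloaded bin into the lightly loaded bins, let $T_i$ be the time the heavy bin's load drops from $\avg + i$ to $\avg + i - 1$; this is $\Exp\bigl(i'(n-1)/n\bigr)$ for the appropriate count $i'$ of balls... but crucially the rate is $(\avg+i)\cdot(n-1)/n = \Theta(\avg)$ since $i \le d \le \avg/2$, so each $T_i$ has mean $\Theta(1/\avg)$ and the sum over $i$ from $d/2$ to $d$ has mean $\Theta(d/\avg) = \LDAUOmicron{1}$ (using $d\le\avg$). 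After these moves, the freed balls land in uniformly random bins among the $n-1$ light ones, so each light bin receives $\Bin(\Theta(d), 1/(n-1))$ extra balls, with mean $\Theta(d/n)$; a Chernoff bound (\cref{lem:chernoff}, the small-deviation form \eqref{chernoffsmalldev}) plus a union bound over $n$ bins shows that w.h.p.\ no light bin exceeds its mean by more than $\LDAUOmicron{\sqrt{(d/n)\ln n}}$, hence the new maximum load is $\avg + \LDAUOmicron{\sqrt{d\ln n}} \le \avg + d/2$ when $d\ge 8\ln n$. The minimum load only increases under the protocol, so discrepancy halves. Concentration of the $T_i$ sum (\cref{lem:concentration_exponentials}) upgrades the expected $\LDAUOmicron{1}$ time to $\LDAUOmicron{\ln n}$ w.h.p., or better, to $\LDAUOmicron{1 + \ln(n)/\avg}$ w.h.p.

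Summing over the halving phases: phase $j$ starts with discrepancy $d_j \approx 2^{-j}\avg$ and costs $\LDAUOmicron{1 + \ln(n)/\avg}$ time w.h.p.; there are $\LDAUOmicron{\ln\avg} = \LDAUOmicron{\ln n}$ phases before $d_j$ reaches $8\ln n$, so the total is $\LDAUOmicron{\ln n}\cdot\LDAUOmicron{1+\ln(n)/\avg}$, which is $\LDAUOmicron{\ln n}$ precisely when $\avg \ge \ln n$ — and indeed we are in the regime $\avg > 16\ln n$. Wait: $\LDAUOmicron{\ln n \cdot \ln(n)/\avg}$ with $\avg > 16\ln n$ is $\LDAUOmicron{\ln n}$, good; and $\LDAUOmicron{\ln n \cdot 1} = \LDAUOmicron{\ln n}$, also good. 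A union bound over the $\LDAUOmicron{\ln n}$ phases keeps the total failure probability $n^{-\LDAUOmega 1}$. I expect the main obstacle to be making the per-phase cost genuinely $\LDAUOmicron{1}$-ish rather than $\LDAUOmicron{\ln n}$: the destructive-move reduction must be done carefully so that the heavy bin still has load $\Theta(\avg)$ throughout the phase (this needs $d \le \avg$, which holds since $d \le \avg/2$ initially and only shrinks), and the w.h.p.\ concentration of $\sum_i T_i$ must not cost a full $\ln n$ factor when $\avg$ is large — this is exactly what \cref{lem:concentration_exponentials} buys us, since $\Var{\sum T_i} = \LDAUOmicron{d/\avg^2}$ is tiny. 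A secondary subtlety is that after each halving the configuration is no longer of the clean ``one heavy bin'' form, so each phase must re-apply \cref{lem:coupling} to re-simplify, and one must check the coupling composes across phases (it does, since \cref{lem:coupling} applies to any configuration reached at any time).
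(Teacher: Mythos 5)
Your destructive-moves reduction to a single heavy bin is the gap. \cref{lem:coupling} only lets you replace the current configuration by one that is reachable \emph{via destructive moves} — equivalently (for sorted loads) a configuration that \emph{majorizes} the current one. The target you propose, $(\avg+d,\ \avg-\tfrac{d}{n-1},\dots,\avg-\tfrac{d}{n-1})$, does not majorize a generic $d$-balanced configuration: take $n=4$, $\avg=10$, $d=4$, current loads $(14,14,6,6)$; your target $(14,\tfrac{26}{3},\tfrac{26}{3},\tfrac{26}{3})$ has top-two partial sum $22.67<28$, so no sequence of destructive moves produces it. Intuitively, destructive moves only push balls from lower to higher bins, so they cannot lower the second bin from $14$ toward $\avg$ unless something above $14$ absorbs those balls — and your target has nothing above $14$. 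The paper's reduction avoids this by pushing the $n/2$ heaviest bins up to $\avg+x$ and the $n/2$ lightest down to $\avg-x$, which genuinely majorizes any $x$-balanced configuration (see the proof of \cref{lem:core} and \cref{fig:reordering}).

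Beyond that fixable point, your quantitative scheme differs from the paper's. You halve the discrepancy per phase and therefore need $\Theta(\log\avg)$ phases; the paper's \cref{lem:core} shows that running a phase for time $\ln\bigl(\tfrac{\avg+x}{\avg-x}\bigr)\approx 2x/\avg$ — about twice as long as your halving step — drops the discrepancy all the way to $2\sqrt{x\ln n}$, so only $\Theta(\log\log\avg)$ iterated-square-root phases are needed. The paper also sidesteps bounding random phase \emph{durations}: each phase runs for a deterministic time and is deemed successful if the discrepancy has dropped by the end (a geometric number of repeats otherwise, controlled by \cref{lem:concentration_geometric}); you instead wait for a random stopping time $\sum_i T_i$ and must invoke \cref{lem:concentration_exponentials} per phase, which works but is heavier. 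With the correct $n/2$--$n/2$ simplification and careful accounting that every heavy bin loses $\approx d/2$ balls with deviation $O(\sqrt{d\ln n}) \le d/4$ (needing $d\ge c\ln n$), your telescoping-phase bookkeeping ($\sum_j d_j/\avg=O(1)$ plus $O(\ln n)$ phases $\times$ $O(\ln n/\avg)$ slack, using $\avg>16\ln n$) would go through — but as written the simplification step is unjustified, and the paper's square-root reduction is both tighter and simpler.
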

For proving \cref{lem:main}, we will apply the following lemma iteratively.
\begin{lemma}\label{lem:core}
Consider an initial configuration $\bm{\ell}=\bm{\ell}(0)$ with $\disc(\bm{\ell})\leq x$ for some $x\geq4\cdot\ln n$.
Let $T_x\coloneqq\inf\set{t|\disc(\bm{\ell}(t))\leq2\sqrt{x\cdot\ln n}}$.
Then, with probability $\geq 1- n^{-1}$ we have $T_x\leq\ln\bigl(\frac {\avg+x}{\avg-x}\bigr)$.
Moreover, $T_x$ is dominated by $Y\cdot\ln\bigl(\frac{\avg+x}{\avg-x}\bigr)$, where $Y$ is a geometric random variable with parameter $1-n^{-1}$.
\end{lemma}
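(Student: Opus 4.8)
The plan is to invoke the \dml (\cref{lem:coupling}) twice---once to pass to a worst-case initial configuration, once to discard all ``inconvenient'' protocol moves---and then to reduce to a harmonic sum of exponential waiting times, as in the proofs of \cref{lem:smallm,lem:first}, finishing with \cref{lem:concentration_exponentials} and a Chernoff/union-bound argument to upgrade to high probability. I may assume $x<\avg$ (otherwise $\ln\bigl(\tfrac{\avg+x}{\avg-x}\bigr)=\infty$ and there is nothing to prove), so in particular $\avg-x\ge0$; and $2\sqrt{x\ln n}\le x$ because $x\ge4\ln n$, so reaching discrepancy $2\sqrt{x\ln n}$ is a genuine improvement over $\disc\le x$.

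\emph{First reduction.} By \cref{lem:coupling} I may preprocess $\bm{\ell}(0)$ with destructive moves, and I claim it suffices to start from the \emph{polarized} configuration in which (up to constant-size deviations forced by parity and non-integrality) $n/2$ bins have load $\avg+x$ and $n/2$ have load $\avg-x$. Indeed, a short majorization / Robin-Hood argument shows that the polarized configuration majorizes every configuration of discrepancy $\le x$; equivalently, every such configuration is reachable from the polarized one by valid protocol moves, hence the polarized one is reachable from it by destructive moves. \emph{Second reduction.} Reversing any \shortlocalsearch move is destructive, so by \cref{lem:coupling} I may let the adversary undo any chosen subset of protocol moves; I keep only moves from a bin of current load $>\avg$ to a bin of current load $<\avg$. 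In the resulting process, overloaded bins only lose balls, underloaded bins only gain balls, and a bin at load exactly $\avg$ never moves, so the process still converges---in fact to perfect balance---which is essential: the modified process really does become $2\sqrt{x\ln n}$-balanced. By the reflection $\ell_i\mapsto2\avg-\ell_i$ the overloaded and underloaded sides are mirror images, so it suffices to bound the time until one fixed underloaded bin has risen to load $\avg-2\sqrt{x\ln n}$, and then to union-bound over the $n$ bins.

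Fix such a bin, at current load $\ell<\avg$. Its rate of receiving a ball is $\tfrac1n\sum_{j:\ell_j>\ell}\ell_j$, which is at least $\tfrac1n$ times the total load of the originally-overloaded bins; since each kept move transfers one unit from the overloaded side to the underloaded side and the total overflow is only $(n/2)x$, that total is $\ge(n/2)(\avg+x)-(n/2)x=(n/2)\avg$, so the fill rate is always at least $\avg/2$. More sharply, using that the mass transferred so far is spread almost evenly over the $n/2$ underloaded bins, the total transferred by time $t$ is $\approx(n/2)\bigl(\ell(t)-(\avg-x)\bigr)$, so the fill rate is $\approx(2\avg-\ell)/2$; hence the time for the bin to rise from $\avg-x$ to $\avg-2\sqrt{x\ln n}$ is stochastically dominated by a sum of independent exponentials with expectation $\le\sum_j 2/j\le2\ln\bigl(\tfrac{\avg+x}{\avg+2\sqrt{x\ln n}}\bigr)$ and a correspondingly small variance. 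The elementary inequality $2\ln\bigl(\tfrac{\avg+x}{\avg+2\sqrt{x\ln n}}\bigr)\le\ln\bigl(\tfrac{\avg+x}{\avg-x}\bigr)$ (which reduces to $\avg^2-x^2\le(\avg+2\sqrt{x\ln n})^2$), together with \cref{lem:concentration_exponentials}, shows that this bin reaches load $\ge\avg-2\sqrt{x\ln n}$ by time $\ln\bigl(\tfrac{\avg+x}{\avg-x}\bigr)$ with probability $1-n^{-\LDAUOmega{1}}$; a union bound over the $n$ bins (and a Chernoff bound, \cref{lem:chernoff}, to justify the ``spread almost evenly'' step, where $x\ge4\ln n$ is used) then gives $\Pr{T_x\le\ln\bigl(\tfrac{\avg+x}{\avg-x}\bigr)}\ge1-n^{-1}$ after adjusting constants, and \cref{lem:coupling} transfers this from the modified process back to \shortlocalsearch. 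The ``moreover'' statement is then immediate from \cref{lem:whptoexp}: both $x$-balance and $2\sqrt{x\ln n}$-balance are preserved by \shortlocalsearch, the parameters $d_2=x$ and $d_1=2\sqrt{x\ln n}$ satisfy $d_1\le d_2$ since $x\ge4\ln n$, and we have just verified its hypothesis with $t=\ln\bigl(\tfrac{\avg+x}{\avg-x}\bigr)$ and $p=1-n^{-1}$.

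I expect the main obstacle to be precisely the ``spread almost evenly'' step. The crude deterministic bound (fill rate $\ge\avg/2$) is \emph{not} quite strong enough---it costs an extra factor in the waiting time and overshoots $\ln\bigl(\tfrac{\avg+x}{\avg-x}\bigr)$ once $x\gg\ln n$---so one must show that, throughout the window $\intcc{0,\ln((\avg+x)/(\avg-x))}$, no underloaded bin has been favored or neglected by more than $\LDAUOmicron{\sqrt{x\ln n}}$ balls relative to the common value, which requires a per-bin Chernoff estimate of the number of balls received during the window plus a union bound. Squeezing the clean constant $2$ in $2\sqrt{x\ln n}$ out of this (rather than a larger constant) is the delicate point, and is exactly where the hypothesis $x\ge4\ln n$ and the slack in the inequality above are spent.
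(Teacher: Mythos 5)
Your first reduction (polarize to $n/2$ bins at $\avg+x$ and $n/2$ at $\avg-x$ via destructive moves) is exactly right and matches the paper. Your instinct to keep only heavy-to-light moves is also right, and the concluding reduction of the ``moreover'' part to \cref{lem:whptoexp} is the same as the paper's. But there is a genuine gap in the middle, and you have correctly located it yourself: you need ``no underloaded bin has been favored or neglected by more than $\LDAUOmicron{\sqrt{x\ln n}}$ balls,'' and your framework of tracking a dynamic fill rate does not give it to you. The problem is that your second reduction keeps moves conditional on \emph{current} loads, so the arrival rate at a light bin depends on the evolving global configuration (which heavy bins have dropped to $\avg$, how many balls remain overloaded, etc.), and the passage-time sum of exponentials you write down is not a sum of \emph{independent} exponentials with a controlled rate. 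The reflection argument does not rescue this either, since the reduced process is not symmetric under $\ell_i\mapsto2\avg-\ell_i$ (only balls in heavy bins are activated).

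The missing idea, and the crux of the paper's proof, is a further use of the \dml: fix the deterministic time window $\intcc{0,t}$ with $t=\ln\bigl(\frac{\avg+x}{\avg-x}\bigr)$, so each ball activates during $\intcc{0,t}$ with probability exactly $p=1-e^{-t}=2x/(\avg+x)$, and allow every heavy-to-light move during this window \emph{unconditionally} (even when it has become destructive because the source bin has already emptied past $\avg$ or the destination has already filled past $\avg$). This is covered by \cref{lem:coupling} since any such move is destructive. Once moves are unconditional, the number of balls a fixed heavy bin loses over $\intcc{0,t}$ is simply $\Bin(\avg+x,p/2)$, and the number a fixed light bin gains is $\Bin((\avg+x)(n/2),p/n)$, both with mean $x\geq4\ln n$ and both independent of the trajectory; the Chernoff bound (\cref{lem:chernoff}) and a union bound over the $n$ bins then give the result directly, with no dynamic conditioning, no ``spread-evenly'' lemma, and no passage-time decomposition. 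In short: you should replace the rate/hitting-time argument by a one-shot binomial count over a fixed window, made legal by a final application of the \shortdml.
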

\begin{proof}
First note that the second statement (domination by $Y\cdot\ln\bigl(\frac{\avg+x}{\avg-x}\bigr)$) follows from the first one via \cref{lem:whptoexp}.
We now prove the first statement.
Assume for simplicity that $\avg\pm x$ are integers.
Let $p\coloneqq2x/(x+\avg)$ and $t\coloneqq\ln(\avg+x)-\ln(\avg-x)$.
Note that the probability of activation of each ball during the interval $\intcc{0,t}$ is $1-\exp(-t)=p$.
Using \cref{lem:coupling} we make the following simplifying assumptions (see also \cref{fig:reordering}):
\begin{enumerate}
\item At time 0, we move some balls from the $n/2$ lightest bins (the \emph{light bins}) to the $n/2$ heaviest bins (the \emph{heavy bins}) in such a way that all light bins have exactly $\avg-x$ balls and all heavy bins have exactly $\avg+x$ balls.
    All these moves are destructive, thus we can assume (by \cref{lem:coupling}) that we start in the resulting configuration.
    Bins labeled as light/heavy in the beginning keep this label (regardless of how their loads changes) during the time interval $[0,t]$.
\item During the time interval $[0,t]$, we ignore activations of balls in light bins (as we could reverse them via \cref{lem:coupling}).
\item During the time interval $[0,t]$, we ignore movements between any two heavy bins (as we could reverse them via \cref{lem:coupling}).
\item During the time interval $[0,t]$, if a ball in a heavy bin $i$ is activated and tries to move to a light bin $i'$, it does so unconditionally (i.e., even if $\ell_{i} < \ell_{i'}+1$; in that case it is a destructive move which we may allow via \cref{lem:coupling}).
\end{enumerate}

\begin{figure*}
\centering
\includegraphics[width=\linewidth]{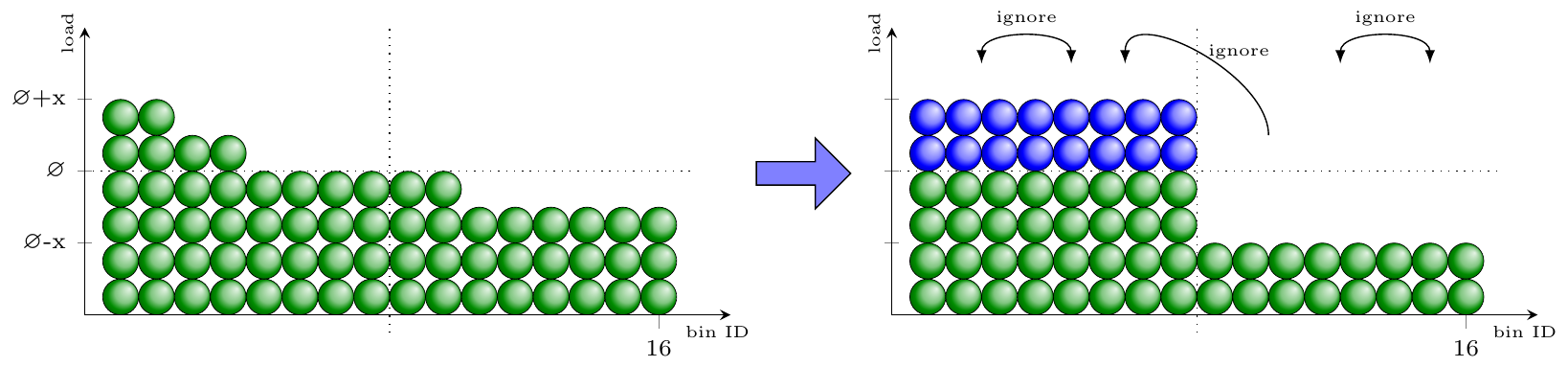}
\caption{%
    Using \cref{lem:coupling}, we reorder the balls such that the distance to the average is exactly $x=2$ in the first and last eight bins.
    We also allow only moves from heavy to light bins; all other moves are ignored.
}
\label{fig:reordering}
\end{figure*}

First, consider a heavy bin.
During the time interval $[0,t]$, each of its balls is activated with probability $p$, and moves to a light bin with probability $1/2$.
So this bin loses $\Bin(\avg+x,p/2)$ balls.
This binomial has expected value $x\geq4\ln n$.
Thus, by Chernoff, with probability $\geq1-n^{-2}$ its value is in $\intcc{x-2\sqrt{x\ln n},x+2\sqrt{x\ln n}}$.
This bin had $\avg+x$ balls initially, so with probability $\geq1-n^{-2}$, it will have between $\avg-2\sqrt{x\cdot\ln n}$ and $\avg+2\sqrt{x\cdot\ln n}$ balls at time $t$.

Next, consider a light bin.
There are $(\avg+x)(n/2)$ balls it can potentially receive during the time interval $[0,t]$.
It receives each one with probability $p/n$, so the number of balls it receives is $\Bin((\avg+x)(n/2),p/n)$.
This binomial has expected value $x\geq4\ln n$.
Thus, by Chernoff, with probability $\geq 1-n^{-2}$ its value is in $\intcc{x-2\sqrt{x\ln n},x+2\sqrt{x\ln n}}$.
This bin had $\avg-x$ balls initially, so with probability $\geq 1-n^{-2}$, it will have between $\avg-2\sqrt{x\cdot\ln n}$ and $\avg+2\sqrt{x\cdot\ln n}$ balls at time $t$.
Applying the union bound over all bins completes the lemma's proof.
\end{proof}
We now present the proof of \cref{lem:main}.
\begin{proof}
[Proof of \cref{lem:main}]
Note that, whenever $x \leq\avg/2$, we have
\(
\ln(\avg+x)-\ln(\avg-x)=\ln\bigl(1+\frac{2x}{\avg-x}\bigr)
\leq \frac{2x}{\avg-x}
\leq 4x/\avg
\).
Define $t_0\coloneqq0$, $x_0\coloneqq\avg/2$, and $x_k\coloneqq\sqrt{4x_{k-1}\cdot\ln n}$ for $k>0$.
By induction, $x_k\leq4\ln(n)\cdot\smash{x_0^{1/2^k}}$.
Let $r=\log_2\log_2\avg$.
Since $x_0\leq\avg$, we have $x_r\leq4\ln(n)\cdot\avg^{1/\log_2\avg}=8\cdot\ln n$.
Let $Y_1,Y_2,\dots$ be independent geometric random variables with parameter $1-n^{-1}$.
Applying \cref{lem:core} iteratively, the time to reach an $x_r$-balanced configuration is stochastically dominated by
\begin{equation}
     Z_r\coloneqq\sum_{i=0}^{r-1}Y_i\cdot4x_i/\avg
\leq \sum_{i=0}^{r-1}c_i\cdot Y_i
,
\end{equation}
where $c_i \coloneqq 16 \cdot \ln(n) \cdot \smash{x_0^{1/2^i}} / \avg$.
Straightforward calculations yield $\max c_i = \LDAUOmicron{\ln n}$, $\sum c_i =
\LDAUOmicron{\ln n}$, and  $\sum c_i^2 = \LDAUOmicron{\ln^2 n}$ (see below for the detailed calculations). By
concentration of sums of geometric random variables (see
\cref{lem:concentration_geometric}), we find that $Z_r = \LDAUOmicron{\ln n}$
w.h.p., completing the proof.

Detailed calculations: it only remains to show that
$\max c_i= \LDAUOmicron{\ln n}$,
$\sum c_i = \LDAUOmicron{\ln n}$,
and
$\sum c_i^2 = \LDAUOmicron{\ln^2 n}$.
First, we have $\max c_i=c_0=8\cdot\ln n$.  

Second, note that for any $y\geq1$ we have
$\sum_{i=0}^{r-1}y^{1/2^i}\leq2y+4r$.  Indeed, let $k$ be the smallest integer
such that $y^{1/2^k}<4$.  Then,
\begin{align*}
       \sum_{i=0}^{r-1}y^{1/2^i}
  =    \sum_{i=0}^{k}y^{1/2^i}+\sum_{i=k+1}^{r-1}y^{1/2^i}
  \leq \sum_{i=0}^{k}y/{2^i}+4r
  <    2y+4r.
\end{align*}
Using this, we bound
\begin{align*}
       \sum_{i=0}^{r-1}c_i
& =    \frac{16\cdot\ln(n)}{\avg}\cdot\sum_{i=0}^{r-1}x_0^{1/2^i}
  \leq \frac{16\cdot\ln(n)}{\avg}\cdot(2x_0+4r) \\
& =    \frac{16\cdot\ln(n)}{\avg}\cdot(\avg+4\log\log\avg)
  \leq 16\cdot\ln(n)\cdot2
\end{align*}

Finally, using a similar analysis we get
\begin{align*}
       \sum_{i=0}^{r-1}c_i^2
& =    {\left(\frac{16\cdot\ln(n)}{\avg}\right)}^2\cdot\sum_{i=0}^{r-1}x_0^{2/2^i}\\
& \leq {\left(\frac{16\cdot\ln(n)}{\avg}\right)}^2\cdot(2x_0^2+4r)\\
& =    {\left(\frac{16\cdot\ln(n)}{\avg}\right)}^2\cdot\left(\frac{\avg^2}{2}+4\cdot\log\log\avg\right) \\
& \leq {\left(\frac{16\cdot\ln(n)}{\avg}\right)}^2\cdot\avg^2=256\cdot{(\ln n)}^2.\qedhere
\end{align*}
\end{proof}

\subsection{Phase 2: Reaching a \texorpdfstring{$\bm{1}$}{1}-balanced Configuration}\label{sec:analysis:homogeneous:phase2}
\begin{lemma}\label{lem:potential}
Consider an initial configuration $\bm{\ell}=\bm{\ell}(0)$ with $\disc(\bm{\ell})=\LDAUOmicron{\ln n}$.
Let $T\coloneqq\inf\set{t|\disc(\bm{\ell}(t))\leq1}$.
Then $\Ex{T}=\LDAUOmicron{n/\avg}$.
\end{lemma}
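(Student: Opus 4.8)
The plan is to track two ``overshoot'' potentials and to bound, by a multiplicative‑drift argument, the expected time for each of them to vanish; the \dml is not needed for this (though it could be used to simplify the bookkeeping). Write $D_0 \coloneqq \disc(\bm{\ell}(0)) = \LDAUOmicron{\ln n}$ and assume $D_0 \geq 2$ (else there is nothing to prove). Recall that $\disc(\bm{\ell}(t)) \leq D_0$ for all $t$, that $\avg$ is an integer, and that $\disc(\bm{\ell}) \leq 1$ is equivalent to $\avg - 1 \leq \ell_i \leq \avg + 1$ for every bin $i$. Define
\[
  \Phi^+(\bm{\ell}) \coloneqq \sum_{i=1}^n \max\{0,\, \ell_i - \avg - 1\},
  \qquad
  \Phi^-(\bm{\ell}) \coloneqq \sum_{i=1}^n \max\{0,\, \avg - 1 - \ell_i\},
\]
so that $\disc(\bm{\ell}) \leq 1$ iff $\Phi^+(\bm{\ell}) = \Phi^-(\bm{\ell}) = 0$. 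A short case check shows that no \shortlocalsearch move increases $\Phi^+$ or $\Phi^-$, and that $\Phi^+$ decreases by exactly $1$ whenever a ball moves out of a bin of load $\geq \avg+2$ into a bin of load $\leq \avg$ (symmetrically for $\Phi^-$). Since both potentials are integer‑valued and non‑increasing, $T = \max\{T^+, T^-\}$, where $T^{\pm}$ is the first time $\Phi^{\pm}$ vanishes; so it suffices to show $\Ex{T^+} = \LDAUOmicron{n/\avg}$ and, symmetrically, $\Ex{T^-} = \LDAUOmicron{n/\avg}$, and add.

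For the first bound, fix a state with $\Phi^+ = \phi > 0$ and let $H \coloneqq |\{i : \ell_i \geq \avg+2\}|$ and $K \coloneqq |\{j : \ell_j \leq \avg\}|$. The $H$ overloaded bins hold at least $H(\avg+2)$ balls, and an activation of one of these balls that picks one of the $K$ eligible target bins decreases $\Phi^+$ by $1$; hence $\Phi^+$ decreases at rate at least $H(\avg+2)\,K/n$. Two counting facts turn this into a useful drift. First, each overloaded bin contributes at most $D_0-1$ to $\phi$, so $H \geq \phi/(D_0-1)$. Second, since $\sum_i \ell_i = n\avg$ and every bin has load at most $\avg + D_0$, a pigeonhole count gives $K \geq n/(D_0+1)$ whenever $\avg \geq D_0$ (and the trivial bound $K \geq n/(\avg+1)$ covers the case $\avg < D_0$). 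Plugging in, $\Phi^+$ has multiplicative drift: it decreases at rate at least $\delta\cdot\Phi^+$ with $\delta = \LDAUOmega{\avg/D_0^2}$ if $\avg \geq D_0$ and $\delta = \LDAUOmega{1/D_0}$ if $\avg < D_0$. As $\Phi^+(\bm{\ell}(0)) \leq nD_0$, the multiplicative‑drift theorem (equivalently, summing over the $\LDAUOmicron{\ln(nD_0)}$ phases in which $\Phi^+$ halves, each of expected length at most $1/\delta$) yields $\Ex{T^+} = \LDAUOmicron{\ln(nD_0)/\delta}$. For $\avg \geq D_0$ this is $\LDAUOmicron{D_0^2\ln(n)/\avg} = \LDAUOmicron{{(\ln n)}^3/\avg}$; for $\avg < D_0$ (so $\avg = \LDAUOmicron{\ln n}$) it is $\LDAUOmicron{D_0\ln n} = \LDAUOmicron{{(\ln n)}^2}$; since ${(\ln n)}^3 = \LDAUOmicron{n}$, both are $\LDAUOmicron{n/\avg}$. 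The argument for $\Phi^-$ is identical after replacing ``load $\geq \avg+2$, target $\leq \avg$'' by the symmetric notions, so $\Ex{T} \leq \Ex{T^+}+\Ex{T^-} = \LDAUOmicron{n/\avg}$.

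The step I expect to be the crux is the lower bound on the number $K$ of available target bins. The naive bound $K \geq n/(\avg+1)$ is too weak when $\avg$ is large, because the factor $\avg$ it loses exactly cancels the $\avg$ gained from each overloaded bin holding at least $\avg$ balls, leaving a drift rate of order $1/\operatorname{polylog} n$ with no $\avg$‑dependence and hence only an $\LDAUOmicron{{(\ln n)}^2}$ time bound, which fails to be $\LDAUOmicron{n/\avg}$ for $\avg$ close to $n$. It is the discrepancy constraint $\disc(\bm{\ell}(t)) \leq D_0 = \LDAUOmicron{\ln n}$ that rescues matters: it forces $K = \LDAUOmega{n/\ln n}$, hence the drift rate $\LDAUOmega{\avg/\operatorname{polylog} n}$. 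A secondary subtlety is that \shortlocalsearch compares loads with $\geq$ rather than $>$, so ``neutral'' moves are permitted; these leave both potentials unchanged, so one must be careful in the monotonicity check and in pinning down exactly which moves are guaranteed to decrease $\Phi^{\pm}$. Finally, if a more uniform presentation is desired, the \dml can first be used to replace $\bm{\ell}(0)$ by the majorization‑worst $D_0$‑balanced configuration (about $n/2$ bins of load $\avg+D_0$ and $n/2$ of load $\avg-D_0$), which makes the counting estimates transparent but is not essential.
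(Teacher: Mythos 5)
Your proposal is correct, and it takes a genuinely different route from the paper's. The paper splits Phase~2 into two subphases with two different potentials: it first tracks the number of overloaded balls $A=\sum_i\max\{0,\ell_i-\avg\}$ and shows (via a rate of order $A^2\avg/(n\ln^2 n)$, coming from $\min\{h,k\}=\Omega(A/\ln n)$) that $A$ drops to $n$ in expected time $\LDAUOmicron{(\ln n)^2/\avg}$; it then switches to the potential $3A-k-h\in[0,3n]$ and shows an \emph{additive} drift of $\Omega(\avg)$, giving $\LDAUOmicron{n/\avg}$. Your approach instead runs a single \emph{multiplicative}-drift argument on the shifted overshoots $\Phi^\pm$, exploiting the same two counting facts (the overloaded-bin count $H\geq\Phi^+/(D_0-1)$ from $\disc\leq D_0$, and the target-bin count $K=\Omega(n/D_0)$ from the pigeonhole). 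This is arguably cleaner (one argument, no case split on $r\geq n/3$) and in fact yields the sharper bound $\LDAUOmicron{\ln(n)\,D_0^2/\avg}=\LDAUOmicron{(\ln n)^3/\avg}$ for $\avg\geq D_0$, which you then observe is $\LDAUOmicron{n/\avg}$; the paper is content with $\LDAUOmicron{n/\avg}$ itself since Phase~3 already costs $\Theta(n/\avg)$. Your approach also genuinely dispenses with the \dml for this lemma, replacing the paper's \enquote{ignore other moves} device by the observation that $\Phi^\pm$ are non-increasing under every \shortlocalsearch move, which is a nice simplification.

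One small slip in the writeup: the pigeonhole argument for $K\geq n/(D_0+1)$ needs the \emph{lower} load bound $\ell_i\geq\avg-D_0$ (valid when $\avg\geq D_0$), not \enquote{every bin has load at most $\avg+D_0$} as written; with the upper bound alone the inequality $n\avg\leq K\avg+(n-K)(\avg+D_0)$ collapses to the vacuous $K\leq n$. The correct derivation is $n\avg\geq(n-K)(\avg+1)+K(\avg-D_0)$, giving $K\geq n/(D_0+1)$, and the symmetric argument for $\Phi^-$ correctly uses the upper bound $\ell_i\leq\avg+D_0$ to get the same bound on the set of source bins $H'=|\{i:\ell_i\geq\avg\}|$. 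Everything else — the monotonicity of $\Phi^\pm$, the fact that each qualifying move decreases the potential by exactly $1$ (so the values $\phi_0,\phi_0-1,\dots,0$ are all visited), the rate $\geq H(\avg+2)K/n$, and the continuous-time summation $\Ex{T^+}\leq\sum_{\phi=1}^{\Phi^+(0)}1/(\delta\phi)=\LDAUOmicron{\ln(nD_0)/\delta}$ — is sound.
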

Before we prove \cref{lem:potential}, let us introduce the notion of \emph{overloaded bins/balls}:
A bin $i$ is \emph{overloaded} if $\ell_i>\avg$.
The quantity
$\sum_{i} \max\set{0, \ell_i(t)-\avg}$ is called \emph{the number of overloaded balls}.
If we enumerate the balls in each bin arbitrarily using natural numbers, this is simply the number of balls whose number is greater than $\avg$.
For instance, in \cref{fig:reordering} (left), the number of overloaded balls is 6.
Note that this is also the number of \enquote{holes} (i.e., $\sum_{i} \max\set{0, \avg-\ell_i(t)}$).

We split this phase into two subphases.
First, we show that it takes $\LDAUOmicron{{(\ln n)}^2/\avg}$ time to reduce the number of overloaded balls to $n$ (\cref{lem:2}).
Afterward (\cref{lem:1}), we prove that if both the discrepancy is logarithmic and the number of overloaded balls is small, a 1-balanced configuration is reached in time $\LDAUOmicron{n/\avg}$.
Together, these immediately imply \cref{lem:potential}.
\begin{lemma}\label{lem:2}
Suppose $\disc(\bm{\ell}(0))=\LDAUOmicron{\ln n}$ and let $T\coloneqq\inf\set{t > 0 \big| \sum_{i} \max\set{0,\ell_i(t) - \avg} \leq n}$.
Then $\Ex{T}=\LDAUOmicron{{(\ln n)}^2/\avg}$.
\end{lemma}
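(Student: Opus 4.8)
The plan is to run a drift argument directly on the potential $\Phi(t)\coloneqq\sum_i\max\set{0,\ell_i(t)-\avg}$, i.e., the number of overloaded balls, so that $T$ is the first time $\Phi(t)\leq n$. Two monotonicity facts set the stage. First, recall from \cref{sec:preliminaries} that the discrepancy never increases, so $\disc(\bm\ell(t))\leq D\coloneqq\disc(\bm\ell(0))=\LDAUOmicron{\ln n}$ for all $t\geq0$; in particular $\Phi(0)\leq nD=\LDAUOmicron{n\ln n}$. Second, $\Phi$ itself never increases: a short case check on a legal move from bin $i$ to bin $j$ (so $\ell_i\geq\ell_j+1$) shows that $\Phi$ changes only when the source bin is overloaded ($\ell_i>\avg$) and the destination is strictly underloaded ($\ell_j<\avg$), in which case $\Phi$ drops by exactly $1$. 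Conversely, whenever a ball in an overloaded bin is activated and picks a strictly underloaded bin, the move is automatically legal and $\Phi$ decreases by one.

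Next I would lower-bound the rate at which $\Phi$ decreases. Fix any reachable configuration with $\Phi=k$ and $n<k\leq\Phi(0)$. Since $\sum_i(\ell_i-\avg)=0$, the number of ``holes'' $\sum_i\max\set{0,\avg-\ell_i}$ also equals $k$. Because no bin deviates from $\avg$ by more than $D$, there are at least $k/D$ overloaded bins and at least $k/D$ underloaded bins; and (as $n$ divides $m$) each overloaded bin holds at least $\avg+1$ balls, so at least $\avg k/D$ balls sit in overloaded bins. A $\Phi$-decreasing event occurs exactly when one of those balls is activated --- total rate at least $\avg k/D$ --- and its uniform destination choice lands in a strictly underloaded bin --- probability at least $(k/D)/n$. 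Hence, from \emph{any} configuration with $\Phi=k$, the rate of $\Phi$-decreasing transitions is at least $r_k\coloneqq\avg k^2/(nD^2)$.

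To assemble the bound: since the configuration process is Markovian and the downward exit rate from $\set{\Phi=k}$ is at least $r_k$ uniformly over all such states, the total time $T_k$ spent with $\Phi=k$ satisfies $T_k\preceq\Exp(r_k)$, so $\Ex{T_k}\leq1/r_k$. Because $\Phi$ only ever jumps down by $1$, it visits every integer between $n+1$ and $\Phi(0)$, whence $T=\sum_{k=n+1}^{\Phi(0)}T_k$ and
\[
\Ex{T}\leq\sum_{k=n+1}^{\Phi(0)}\frac{nD^2}{\avg\,k^2}\leq\frac{nD^2}{\avg}\sum_{k>n}\frac{1}{k^2}\leq\frac{D^2}{\avg}=\LDAUOmicron{\frac{(\ln n)^2}{\avg}}.
\]

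The step I expect to be the main obstacle is the rate lower bound, specifically the counting argument that turns ``$\Phi=k$'' into ``$\Omega(k/D)$ overloaded bins, $\Omega(k/D)$ underloaded bins, and $\Omega(\avg k/D)$ balls in overloaded bins''. This is where the hypothesis $\disc(\bm\ell(0))=\LDAUOmicron{\ln n}$ is used, and it is crucial that it persists (via non-increasing discrepancy) so the same bound $D=\LDAUOmicron{\ln n}$ applies throughout the phase, not just at time $0$. The continuous-time bookkeeping behind $T_k\preceq\Exp(r_k)$ --- invoking the strong Markov property together with the fact that $\Phi$ decreases in unit steps --- is routine once the per-state rate bound is in hand. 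Note that \cref{lem:coupling} is not needed for this lemma, although it could be used to simplify configurations further if desired.
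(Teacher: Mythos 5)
Your argument is correct and follows the paper's proof in all essentials: the potential is the number of overloaded balls, the discrepancy bound $D=\LDAUOmicron{\ln n}$ forces at least $\Phi/D$ overloaded and $\Phi/D$ underloaded bins, giving a $\Phi$-decreasing rate of $\LDAUOmega{\avg\Phi^2/(nD^2)}$, and summing the expected sojourn times over $\Phi=n+1,\dots,\Phi(0)$ gives $\LDAUOmicron{(\ln n)^2/\avg}$. The one genuine (if small) difference is in the probabilistic bookkeeping: the paper invokes \cref{lem:coupling} to ``ignore any other move'' so that $h$ and $k$ stay fixed while waiting for the relevant transition, whereas you observe that $\Phi$ is non-increasing with unit downward jumps, that the rate bound $r_k$ holds uniformly over \emph{every} state with $\Phi=k$, and hence the sojourn time in $\{\Phi=k\}$ is dominated by $\Exp(r_k)$ by the strong Markov property --- no coupling needed. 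Your version is arguably the cleaner formalization of the same calculation, and your remark that the \dml is dispensable here is accurate.
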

\begin{proof}
Fix a time $t > 0$ and let $A$ denote the number of overloaded balls.
Let $h$ and $k$ be the number of bins with load $>\avg$ and $<\avg$, respectively.
Observe that $\avg-\LDAUOmicron{\ln n}\leq \ell_{\min}\leq \ell_{\max}\leq\avg+\LDAUOmicron{\ln n}$ implies $\min\set{h,k}=\LDAUOmega{A/\ln n}$ and, thus, $h\cdot k=\LDAUOmega{A^2/{(\ln n)}^2}$.

We wait for some ball in some overloaded bin to choose an underloaded bin and move there.
Using \cref{lem:coupling}, we ignore any other move.
There are $h$ overloaded bins and at least $h\cdot\avg$ balls in them.
The probability that such a ball, when activated, chooses an underloaded bin is $k/n$.
Hence, the expected time for such a move to happen is at most
$\frac{1}{h\cdot\avg}\cdot\frac{n}{k}=\LDAUOmicron[small]{\frac{n\cdot{(\ln n)}^2}{A^2\cdot\avg}}$.
When such a move happens,
the value of $A$ decreases by 1.
Hence, total expected time to reduce $A$ to $n$ is bounded by
\begin{align*}
      \sum_{A=n}^{\infty}\LDAUOmicron{\frac{n\cdot{(\ln n)}^2}{A^2\cdot\avg}}
&=      \LDAUOmicron{\frac{n\cdot{(\ln n)}^2}{\avg}\sum_{A=n}^{\infty}A^{-2}}\\
&= \LDAUOmicron{\frac{n\cdot{(\ln n)}^2}{\avg}\int_{n-1}^{\infty}x^{-2}\dif{x}}
=      \LDAUOmicron{{(\ln n)}^2/\avg}
.\qedhere
\end{align*}
\end{proof}

\begin{lemma}\label{lem:1}
Assume that $\disc(\bm{\ell(0)})=\LDAUOmicron{\ln n}$ and that the number of overloaded balls is at most $n$.
Let $T\coloneqq\inf\set{t|\disc(\bm{\ell}(t))\leq1}$.
Then $\Ex{T}=\LDAUOmicron{n/\avg}$.
\end{lemma}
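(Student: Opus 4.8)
The plan is to show that $\ell_{\max}$ drops to $\avg+1$ and $\ell_{\min}$ rises to $\avg-1$, each within expected time $\LDAUOmicron{{(\ln n)}^2/\avg}$. Since $\avg$ is an integer and the discrepancy never increases, this is precisely what it means to become $1$-balanced, and since $\avg\geq1$ and ${(\ln n)}^2=\LDAUOmicron{n}$, the sum is $\LDAUOmicron{n/\avg}$. The two halves are symmetric, so I describe only the bound on the time until $\ell_{\max}\leq\avg+1$; in particular I do \emph{not} track the number of overloaded balls directly (that potential is too crude near the end), but rather the load levels $\avg+j$.

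By \cref{lem:coupling} I may restrict attention to the process that ignores every ball movement except those from a current heaviest bin to a bin of load at most $\ell_{\max}-2$ (an adversary reverses every other move, and reversing a valid move is destructive, so this cannot speed up balancing). In this restricted process $\ell_{\max}$ is non-increasing, the number of overloaded balls is non-increasing --- a non-ignored move is a valid move, which only shifts a ball downward --- so it remains $\leq n$ by hypothesis, and each non-ignored move decreases the number of heaviest bins by exactly one while no move creates a new heaviest bin. I then walk the levels $\ell_{\max}=\avg+j$ downward from $j=\disc(\bm{\ell}(0))=\LDAUOmicron{\ln n}$ to $j=2$. Fix such a level, let $a$ be the number of bins of load $\avg+j$, and let $b$ be the number of bins of load $\geq\avg+j-1$; each of those $b$ bins carries at least $j-1$ overloaded balls, so $b(j-1)\leq n$. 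For $j\geq3$ this gives $n-b\geq n/2$; for $j=2$ it still gives $n-b=\LDAUOmega{n/\ln n}$ (if at most $n/2$ balls are overloaded then $b\leq n/2$; otherwise the underloaded bins are disjoint from the $b$ heavy bins and there are $\LDAUOmega{n/\ln n}$ of them, since the number of holes equals the number of overloaded balls and $\disc=\LDAUOmicron{\ln n}$). The $a$ heaviest bins hold at least $a\avg$ balls, each of which upon activation lands in one of the $n-b$ admissible target bins with probability $(n-b)/n$, so the number of heaviest bins decreases at rate $\LDAUOmega{a\avg(n-b)/n}$. For $j\geq3$ this rate is $\LDAUOmega{a\avg}$, so clearing level $\avg+j$ takes expected time $\sum_{a=1}^{n}\LDAUOmicron{1/(a\avg)}=\LDAUOmicron{\ln n/\avg}$, which over the $\LDAUOmicron{\ln n}$ levels with $j\geq3$ sums to $\LDAUOmicron{{(\ln n)}^2/\avg}$; the single level $j=2$ contributes $\sum_{a=1}^{n}\LDAUOmicron{\ln n/(a\avg)}=\LDAUOmicron{{(\ln n)}^2/\avg}$. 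The mirror-image argument for $\ell_{\min}$ (keeping only moves from a bin of load $\geq\ell_{\min}+2$ into a current lightest bin) adds $\LDAUOmicron{{(\ln n)}^2/\avg}$ more; one applies it to the configuration left by the first half, which still satisfies the hypotheses, since its discrepancy and overloaded-ball count have only decreased and its maximum load stays $\leq\avg+1$ throughout.

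I expect the crux to be the rate estimate: guaranteeing that at every instant there are enough bins of load $\leq\ell_{\max}-2$ (respectively $\geq\ell_{\min}+2$) to absorb the overflow. This is exactly where the hypothesis ``at most $n$ overloaded balls'' is essential --- it bounds the number of heavy bins via $b(j-1)\leq n$ --- and the tightest case is $j=2$, which additionally uses the logarithmic discrepancy bound and the small case split above. A routine but slightly fiddly point on the $\ell_{\min}$ side is that the feeding bins have load $\geq\avg-j+2$, which is $\LDAUOmega{\avg}$ because $j\leq\min\{\avg,\disc(\bm{\ell}(0))\}$ (and when $\avg$ is itself only $\LDAUOmicron{\ln n}$, the cruder bound $\geq2$ already suffices, as $\LDAUOmicron{n/\avg}$ is then comparatively large).
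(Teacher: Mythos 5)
Your proof is correct, but it takes a genuinely different route from the paper's.

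The paper's argument is a direct potential-function bound: it tracks $\Phi = 3A - k - h$, where $A$ is the number of overloaded balls and $h,k$ are the counts of bins above/below $\avg$. The key claim is that whenever $A > \min\{h,k\}$ (equivalently, whenever the discrepancy exceeds $1$), the expected time to decrease $\Phi$ by at least $1$ is at most $3/\avg$, shown by a three-way case split on whether $r \geq n/3$ (many bins exactly at $\avg$) and whether $A > h$ or $A > k$. Since $0 \leq \Phi \leq 3n$ (using only the hypothesis $A \leq n$) and $\Phi$ never increases, the $O(n/\avg)$ bound follows in one step. Notably, the paper's proof does not need the assumption $\disc(\bm{\ell}(0)) = O(\ln n)$ at all; only $A \leq n$ is used.

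Your proof instead does a level-by-level reduction of $\ell_{\max}$ and (symmetrically) $\ell_{\min}$, using \cref{lem:coupling} to restrict attention to moves out of the current extremal level into targets two or more levels away. The bound $b(j-1) \leq n$ derived from the hypothesis $A \leq n$, together with the logarithmic discrepancy bound (needed both to cap the number of levels and to handle the boundary level $j=2$), gives a per-level rate of $\Omega(a\avg)$ (or $\Omega(a\avg/\ln n)$ at $j=2$), and summing yields $O\bigl((\ln n)^2/\avg\bigr)$ for each side when $\avg = \Omega(\ln n)$, which in all regimes is $O(n/\avg)$. This is essentially the same technique the paper uses for \cref{lem:2}, pushed through the remaining levels; it actually yields a sharper intermediate bound than the stated $O(n/\avg)$ for large $\avg$, though that extra precision is not needed since Phase 3 (\cref{lem:last}) already contributes $\Theta(n/\avg)$. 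Your argument is more intricate and uses both hypotheses of the lemma, while the paper's potential-function argument is shorter, cleaner, and uses only the bound on overloaded balls; on the other hand, yours avoids the ad hoc constant $3$ and the case split on $r \geq n/3$, and is arguably more mechanical once the level-clearing framework is set up.

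One presentational caveat: you invoke two different destructive-move adversaries for the $\ell_{\max}$ and $\ell_{\min}$ halves. To apply \cref{lem:coupling} cleanly you should describe a single adversary that switches strategy at the (stopping) time the first half ends; the lemma permits this since the adversary may have full knowledge of the process, and you should confirm explicitly that the combined modified process has non-increasing discrepancy (it does: the first half never increases $\ell_{\max}$ or decreases $\ell_{\min}$, and the second half preserves $\ell_{\max} \leq \avg+1$ and never decreases $\ell_{\min}$). As written the phasing is correct in spirit but left implicit.
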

\begin{proof}
Let $A$ denote the number of overloaded balls.
Suppose that $h$ bins have load $>\avg$, $r$ bins have load $=\avg$, and $k$ bins have load $<\avg$.
Note that $h+r+k=n$.
We use the quantity $3A-k-h$ as a potential function and prove the following claim.

\noindent\textbf{Claim.}
if $A>\min\set{h,k}$, then the expected time to decrease $3A-k-h$ by at least 1 is $\leq 3/\avg$.

To see that this implies the lemma, note that we always have $A\geq\max\set{h,k}$;
moreover, if $A=\min\set{h,k}$ then $\avg-1\leq \ell_{\min}\leq \ell_{\max}\leq\avg+1$,
which means the discrepancy is 1.
Since $3A-k-h$ is always between 0 and $3n$ and never increases over time, the claim implies that it takes expected time $\LDAUOmicron{n/\avg}$ to achieve discrepancy $1$.

We prove the claim by considering three cases.
\begin{description}[wide,labelindent=0pt]
\item[Case 1:] $r\geq n/3$ and $A>h$.
    We wait for some ball in a bin with load $>\avg+1$ to choose a bin with load $\avg$ and move there (ignoring any other move via \cref{lem:coupling}).
    Since $A>h$, there is at least one bin with load $>\avg+1$ and, hence, $\geq\avg$ such balls.
    The probability that such a ball, when activated, chooses a bin with load $\avg$ is $r/n$.
    Hence the expected time for such a move to happen is at most $\frac{1}{\avg}\cdot\frac{n}{r}\leq3/\avg$.
    When such a move happens, $A$ and $k$ do not change but $h$ increases by 1, so the potential decreases by 1, as required.
\item[Case 2:] $r\geq n/3$ and $A>k$.
    We wait for some ball in a bin with load $\avg$ to choose a bin with load $<\avg-1$ and move there (ignoring any other move via \cref{lem:coupling}).
    Since $A>k$, there is at least one bin with load $<\avg-1$ and, hence, the expected time for such a move to happen is at most $\frac{1}{r\cdot\avg}\cdot\frac{n}{1}\leq3/\avg$.
    When such a move happens, $A$ and $h$ do not change but $k$ increases by 1, so the potential decreases by 1, as required.
\item[Case 3:] $r<n/3$.
    Note that $h+r+k=n$ and, since we are not yet balanced, $h,k\geq1$.
    So $h+k>2n/3$ gives $h\cdot k\geq h+k-1>2n/3-1>n/3$.
    We wait for some ball in an overloaded bin to choose an underloaded bin and move there (ignoring any other move via \cref{lem:coupling}).
    There are $h$ overloaded bins and, hence, $>h\cdot\avg$ such balls.
    The probability that such a ball, when activated, choose an underloaded bin is $k/n$.
    Hence the expected time for such a move to happen is at most $\frac{1}{h\cdot\avg}\cdot\frac{n}{k}\leq3/\avg$.
    When such a move happens, the value of $A$ decreases by 1, while the values of $k,h$ can decrease by at most 1.
    So, the potential decreases by at least 1, as required.
\qedhere
\end{description}
\end{proof}

\subsection{Phase 3: Reaching Perfect Balance}\label{sec:analysis:homogeneous:phase3}
\begin{lemma}\label{lem:last}
Consider an initial configuration $\bm{\ell}=\bm{\ell}(0)$ with $\disc(\bm{\ell})\leq1$.
Let $T\coloneqq\inf\set{t|\disc(\bm{\ell}(t))<1}$.
Then $\Ex{T}=\LDAUOmicron{n/\avg}$.
\end{lemma}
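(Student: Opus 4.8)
The plan is to invoke the \dml{} (\cref{lem:coupling}) to strip \shortlocalsearch down to a clean process that tracks only the number of misplaced balls. First I would record what the hypotheses give us: since $n\mid m$ the average $\avg$ is an integer, so a $1$-balanced configuration has every load in $\set{\avg-1,\avg,\avg+1}$, and since $\sum_i(\ell_i-\avg)=0$ the number $a$ of bins of load $\avg+1$ equals the number of bins of load $\avg-1$; perfect balance is precisely $a=0$. Because \shortlocalsearch never increases the discrepancy, $\bm{\ell}(t)$ stays $1$-balanced for all $t\ge0$, so this picture persists, and in particular every overloaded bin always holds exactly $\avg+1$ balls.

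Next I would enumerate the valid protocol moves available in a $1$-balanced configuration. Up to the choice of which of several equally loaded bins plays the role of source or destination, there are only three: the \emph{useful} move $\avg+1\to\avg-1$, which lowers $a$ by $1$, and the two neutral moves $\avg+1\to\avg$ and $\avg\to\avg-1$, which leave $a$ unchanged. The two neutral moves are destructive, so \cref{lem:coupling} lets me instruct the adversary to reverse each neutral move immediately after the protocol performs it. This produces a process $\bm{\tilde{\ell}}$ in which only useful moves ever take effect, with $\disc(\bm{\ell}(t))\leqd\disc(\bm{\tilde{\ell}}(t))$ for all $t$; since the discrepancy is non-increasing in both processes, this gives $T\leqd\tilde T$ and hence $\Ex{T}\le\Ex{\tilde T}$, where $\tilde T$ is the first time $\bm{\tilde{\ell}}$ reaches $a=0$.

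It then remains to bound $\Ex{\tilde T}$ directly. While $a$ overloaded bins remain, they contain exactly $a(\avg+1)$ balls in total; each such ball is activated at rate $1$ and, when activated, selects one of the $a$ underloaded bins with probability $a/n$, at which point a useful move fires and $a$ drops by $1$. By memorylessness the time spent at level $a$ is therefore $\Exp\bigl(a^2(\avg+1)/n\bigr)$, with mean $n/\bigl(a^2(\avg+1)\bigr)$, and these times are independent across levels. Summing over $a$ from the initial value $a_0\le n/2$ down to $1$ gives $\Ex{\tilde T}=\sum_{a=1}^{a_0}n/\bigl(a^2(\avg+1)\bigr)\le\frac{n}{\avg}\sum_{a=1}^{\infty}a^{-2}=\LDAUOmicron{n/\avg}$; since $\avg=m/n$ this is $\LDAUOmicron{n^2/m}$, as required.

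The one step that genuinely needs care is the move enumeration in the second paragraph: I must verify that the only valid move that is \emph{not} destructive is the useful one, so that handing every other move to the adversary loses nothing. Everything afterward is bookkeeping — the level-by-level exponential decomposition of $\tilde T$ is immediate from the exponential clocks — and the only other point worth stating explicitly is that $\avg$ being an integer is exactly what forces every overloaded bin to have load $\avg+1$, which makes the rate $a^2(\avg+1)/n$ exact rather than merely a lower bound.
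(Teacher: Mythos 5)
Your proof is correct and follows essentially the same route as the paper: use the \dml{} to dismiss neutral moves, note that the number $A$ of bins of load $\avg+1$ decreases by one with each useful move $\avg+1\to\avg-1$, and bound the waiting time at each level by an exponential of rate $\Theta(A^2\avg/n)$, summing to $\LDAUOmicron{n/\avg}$. The only cosmetic differences are that you explicitly enumerate the three move types and use the exact ball count $A(\avg+1)$, whereas the paper simply lower-bounds it by $A\avg$.
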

\begin{proof}
Note that $\avg-1 \leq \ell_{\min} \leq \ell_{\max}\leq \avg + 1$.
By \cref{lem:coupling}, we may ignore any movement of balls from bins with load exactly $m/n$.
Suppose there are $A$ bins of load $>\avg$, and so there are also $A$ bins of load $<\avg$.
If the configuration is not already balanced, then $A\geq 1$ and there are $>\avg\cdot A$ balls that, when activated, find a bin with load $<\avg$ with probability $ A/n$.
The expected time for the first such move to happen is at most
\(
\frac{1}{A \cdot\avg}\cdot\frac{1}{A/n}=\frac{n}{\avg\cdot A^2}
\).
Since $A$ decreases one by one until balancing out (and at that point we would have $A\geq1$), the expected total time to balance out is at most $\sum_{A=1}^{n}\frac{n}{\avg\cdot A^2}\leq\LDAUOmicron{n/\avg}$, as required.
\end{proof}

\section{Conclusion}\label{sec:conclusion}

We analyzed the randomized local search protocol that was first introduced by
\citet{goldberg} and showed that the protocol achieves perfect balance in
expected time $\LDAUOmicron{\ln(n) + n^2/m}$. Moreover, there is a matching
lower bound so our analysis is tight. We now present a few possible future
directions.

The first direction is to extend the analysis to the setting where the bins may
have different speeds, and the load of a bin is defined as its number of balls
divided by its speed. One can consider a similar protocol to \shortlocalsearch:
a ball chooses a random bin on activation, and moves there if and only if doing
so improves its load. A second direction is to study the protocol when the balls
may have different weights. In particular, can we obtain similar balancing times
in the weighted case as in the non-weighted case?  The third direction is to
analyze the protocol in network topologies other than the complete graph.

\end{document}